\documentclass[a4paper,UKenglish,cleveref, autoref,thm-restate]{lipics-v2021}
\usepackage{amsmath}
\usepackage{amsfonts}
\usepackage{amssymb}
\usepackage{graphicx}
\usepackage{tikz}
\usepackage{esvect}
\usepackage[ruled,linesnumbered,longend]{algorithm2e}
\SetArgSty{textnormal}  
\usepackage{caption}
\usepackage{subcaption}
\usepackage{amsthm}
\usepackage{cleveref}
\usepackage{comment}
\usepackage{lineno}
\usepackage{comment}

\usetikzlibrary{arrows,calc,shapes.misc}
\tikzstyle{vertex}=[scale=0.8,draw,circle,fill=white,draw=black]
\tikzstyle{blocked}=[line width=3pt,red]
\tikzstyle{tblocked}=[line width=1.5pt,red]
\tikzstyle{gblocked}=[line width=2pt,white!50!gray]
\tikzstyle{free}=[line width=4pt,black]
\tikzstyle{thin}=[line width=2pt,black]
\tikzstyle{terminal}=[scale=0.8,circle,fill=orange]
\tikzstyle{rterminal}=[scale=0.8,circle,fill=red]
\tikzstyle{redarrow}=[line width=3pt,red,->]
\tikzstyle{dredarrow}=[line width=3pt,red,<->]
\tikzstyle{bluearrow}=[line width=3pt,blue,->]
\tikzstyle{dbluearrow}=[line width=3pt,blue,<->]
\tikzstyle{greenarrow}=[line width=3pt,black!30!green,->]
\tikzstyle{dgreenarrow}=[line width=3pt,black!30!green,<->]

\newtheorem*{mainresult}{Main result}

\newcommand{\kctp}{$k$-CTP\xspace}

\newcommand{\dopt}{d_{\mbox{\scriptsize{opt}}}}

\newcommand{\dopte}{d_{\emph{\scriptsize{opt}}}}
\newcommand{\popte}{P_{\emph{\scriptsize{opt}}}}

\newcommand{\dtr}[1]{d_{ #1 }^{\mbox{\scriptsize{Tr}}}}
\newcommand{\rb}{\emph{RB}}
\newcommand{\rbm}{\mbox{RB}}

\newcommand{\ve}[1]{\mbox{\bf #1}}

\newcommand{\flw}{\textsc{FollowWall}}
\newcommand{\dflw}{\textsc{Doubling-FollowWall}}

\title{Tight asymptotic competitive ratio for the Canadian Traveller Problem on unit-weighted grids}

\titlerunning{Competitive analysis for the \kctp\ on unit-weighted grids}

\author{Pierre {Berg\'e}}{Univ. Grenoble Alpes, CNRS, Grenoble INP, LIG, Grenoble, 38000, France}{pierre.berge@univ-grenoble-alpes.fr}{0000-0002-6170-9473}{}

\author{Vsevolod {Chernyshev}}{Ulm University, Ulm, Germany}{vchern@gmail.com}{}{}

\author{Malory {Marin}}{Univ. Grenoble Alpes, CNRS, Grenoble INP, LIG, Grenoble, 38000, France}{malory.marin@univ-grenoble-alpes.fr}{0009-0008-8253-2831}{}

\author{Alexey {Pyrko}}{Higher School of Economics, Moscow, Russia}{}{}{}

\author{Lou {Sala\"un}}{Nokia Bell Labs, Massy, 91300, France}{lou.salaun@nokia-bell-labs.com}{0000-0003-0522-0435}{}

\authorrunning{P. {Berg\'e}, V. {Chernyshev}, M. {Marin}, A. {Pyrko} and L. {Sala\"un}} 

\Copyright{Pierre Berg\'e, Vsevolod Chernyshev, Malory Marin, Alexey Pyrko, and Lou Sala\"un}

\ccsdesc[500]{Theory of computation~Online algorithms}

\keywords{Canadian Traveller Problem, online algorithms, competitive analysis, grid graphs} 

\category{} 

\relatedversion{} 

\nolinenumbers 

\hideLIPIcs

\EventEditors{John Q. Open and Joan R. Access}
\EventNoEds{2}
\EventLongTitle{42nd Conference on Very Important Topics (CVIT 2016)}
\EventShortTitle{CVIT 2016}
\EventAcronym{CVIT}
\EventYear{2016}
\EventDate{December 24--27, 2016}
\EventLocation{Little Whinging, United Kingdom}
\EventLogo{}
\SeriesVolume{42}
\ArticleNo{23}

\begin{document}

\maketitle

\begin{abstract}
We study the $k$-\textit{Canadian Traveller Problem} (\kctp), an online navigation problem introduced by Papadimitriou and Yannakakis in 1991. Given an undirected weighted graph $G=(V,E,\omega)$ and two vertices $s,t\in V$, a traveller aims at navigating from $s$ to $t$ while minimizing the total distance travelled. However, up to $k$ edges of the graph may be blocked, and the traveller discovers that an edge is blocked only upon reaching one of its endpoints. The performance of an online algorithm for \kctp\ is measured with the competitive ratio, which is the quotient between the distance actually traversed and the distance traversed by an omniscient traveller who knows all blockages in advance.

A general lower bound $2k+1$ for the competitive ratio of deterministic algorithms remains valid even when the input graph is series-parallel, whereas constant competitive ratios can be achieved on restricted graph classes, such as cycles or unit-weighted outerplanar graphs. In this paper, we establish the first intermediate competitive ratio between constant and linear regimes. We prove that, on unit-weighted solid grid graphs, there exists an online algorithm with competitive ratio $O(\sqrt{k})$. This ratio is asymptotically tight and cannot be defeated, even on unit-weighted square grids. Finally, we show that both the unit-weighted and solidity assumptions are essential: going further with arbitrary edge weights (even on square grids) or non-solid unit-weighted grid structures yields a linear lower bound of $\Omega(k)$.
\end{abstract}

\newpage


\section{Introduction}

We study an online navigation problem on graphs, the $k$-\textit{Canadian Traveller Problem} (\kctp), introduced by Papadimitriou and Yannakakis~\cite{PaYa91}. Given an undirected connected weighted graph $G=\left(V,E,\omega\right)$ and a source-target pair $s,t \in V$, a traveller must navigate from $s$ to $t$ in $G$ while minimizing the total distance travelled. Edge weights are positive. Unfortunately, certain edges (at most $k$) of the graph can be blocked. 
The traveller discovers that an edge $e=(u,v)$ is blocked upon reaching one of its endpoints $u$ or $v$. The \kctp is PSPACE-complete~\cite{BaSc91,PaYa91}. Several variants have been studied, including settings in which edges are blocked with a certain probability~\cite{AkSaAr16,BaSc91,fried2013complexity}, predictions are given~\cite{BaEsXe22}, remote edges can be sensed at a cost~\cite{fried2013complexity}, or the objective is to find a shortest tour~\cite{HaXe23,LiHu14}. This problem has applications in robot routing for humanitarian logistics~\cite{AkSaAr16,AlYiAk21,EyKeHe10}.

Any instance of the \kctp\ can be divided into two parts: a visible one, the weighted graph $G=\left(V,E,\omega\right)$ with $s,t \in V$ and a hidden one, the set of blocked edges $E_* \subsetneq E$, $\vert E_* \vert \le k$.
The objective is to make the traveller attain the target $t$ at a minimum cost, {\em i.e.} the sum of the weights of the traversed edges. An instance $\left(G,E_*\right)$ is called a \textit{road map}.
Road maps are assumed \emph{feasible}: $s$ and $t$ belong to the same connected component of the graph obtained from $G$ by removing $E_*$. In other words, there is always a way to complete the walk despite the blockages.

A solution to the \kctp is an online algorithm, called a \textit{strategy}, which guides the traveller along the walk from $s$ to $t$ on the graph. The \emph{competitive ratio}~\cite{BoEl98} of a strategy on a given instance is the quotient between the distance actually traversed by the traveller on that instance and the distance they would have traversed, knowing which edges are blocked in advance. Westphal~\cite{We08} proved that, for any deterministic strategy $A$, there is at least one \kctp road map for which the competitive ratio of $A$ is at least $2k+1$.

One way to circumvent this lower bound consists in adding randomization~\cite{BeWe15,DeHuLiSa14}. A second approach, which we pursue in this paper, is to restrict the competitive analysis of deterministic strategies to certain families of graphs. The $2k+1$ lower bound remains valid even when restricted to unit-weighted series-parallel\footnote{a sub-family of planar graphs of treewidth at most $2$} graphs~\cite{ChChWuWu15}. However, positive results are known for some families: competitive ratio $3$ for cycles and cacti (folklore), competitive ratio $9$ for unit-weighted outerplanar graphs~\cite{BeBeChDa24} and competitive ratio $\sqrt{2}\cdot k+O(1)$ for graphs where the maximum $(s,t)$-cut has a bounded size~\cite{BeSa23}. 

In this paper, we study the \kctp\ on unit-weighted solid grids. Informally, these are induced subgraphs of the square grid whose bounded faces are all unit squares. In particular, this family contains square and rectangular grids. Besides their simple geometric structure, solid grids provide a natural abstraction for studying navigation in transportation networks.

Xu {\em et al.}~\cite{XuHuSuZh09} previously investigated the design of efficient strategies for unit-weighted rectangular grids. They showed that when $k$ is smaller than both the horizontal and vertical widths of the shortest $(s,t)$-path, then the \textsc{Greedy} strategy achieves a constant competitive ratio $3$. Strategy \textsc{Greedy} simply consists, at each moment of the walk, of going through the shortest path between the current position to target $t$, taking into account the already discovered blockages. Their assumption on $k$, however, substantially simplifies the set of instances under consideration. Indeed, in this paper, we prove that the competitive ratio of \textsc{Greedy} without any restriction on $k$ is at least linear in $k$, even on square grids.

We thus propose a new strategy, called \dflw\, achieving a smaller competitive ratio on all unit-weighted solid grid graphs. Our main contribution is summarized:

\begin{mainresult}
There is a polynomial-time deterministic strategy which achieves competitive ratio $O(\sqrt{k})$ for the \kctp\ on unit-weighted solid grids. This ratio is asymptotically tight, since no deterministic strategy can achieve a competitive ratio 
$o(\sqrt{k})$ on this family.
\end{mainresult}

Some techniques employed in our work bear similarities with the ones proposed for outerplanar graphs in~\cite{BeBeChDa24}. Indeed, a key limitation of \textsc{Greedy} strategy is its tendency to persist in one direction when successive blockages are discovered. The distance traversed thus depends linearly on $k$, whereas a constant distance could have been covered on some other open $(s,t)$-path. To address this shortcoming, we introduce the \dflw\ strategy, which always keep a close look on the possible open $(s,t)$-distance instead of insisting on one direction full of blockages. To do so, we assign to the traveller a \emph{distance budget}, which doubles at each iteration: a similar technique was used in~\cite{BeBeChDa24}. In addition, several topological arguments compatible with the grid embedding are used in our competitive analysis for \dflw , leading to ratio $O(\sqrt{k})$. To the best of our knowledge, it is the first time that such tools are employed for the study of \kctp .

A natural question would be whether our result can be extended to grids without weight restrictions. Unfortunately, we show that it is not possible to outperform linear-in-$k$ competitive ratio, even on arbitrarily-weighted square grids.

In Section~\ref{sec:prelim}, we introduce graph notations, the definition of the \kctp , and some preliminary observations on grids. 
Section~\ref{sec:insight} aims at establishing lower bounds to motivate our main contribution: we derive a lower bound $\Omega(\sqrt{k})$ for the competitive ratio of unit-weighted square grids, and prove that \textsc{Greedy} is far from this bound.
In Section~\ref{sec:strategy}, we introduce our strategy \flw\ and show that it achieves a competitive ratio $O(\sqrt{k})$ when the size of the solid grid graph is small relative to the $(s,t)$-distance. This serves as a preliminary result for Section~\ref{sec-double} which generalizes it to all unit-weighted solid grids by introducing the \dflw\ strategy. For all results annotated with a $(\star)$, the full proof is given in the Appendix.

\section{Preliminaries} \label{sec:prelim}

\subsection{Graph notation}

We consider only undirected connected weighted graphs $G=(V,E,\omega)$, where $\omega : E \rightarrow \mathbb{Q}^+$. 
A graph is \textit{unit-weighted} if $\omega(e) = 1$ for every edge $e\in E$. 
A subgraph $G'$ of $G$ is a graph $G'=(V',E',\omega')$, where $V' \subseteq V$, $E' \subseteq E\cap (V' \times V')$, and $\omega' = \omega_{|E'}$, {\em i.e.} $\omega'$ restricted to set $E'$. 
For any vertex set $V' \subseteq V$, we denote by $E[V'] = E \cap (V' \times V')$ the edge set induced by $V'$ and $G\left[V'\right]$ the subgraph of $G$ induced by $V'$: $G\left[V'\right] = \left(V',E[V'],\omega_{|E[V']}\right)$. Let $G\setminus V'$ be the graph deprived of set $V'$: $G\setminus V' = G[V \setminus V']$. For any $V'\subseteq V(G)$, we denote by $N_G[V']$ the closed neighborhood of $V'$, consisting of all vertices in $V'$ and all vertices adjacent to at least one vertex of $V'$. Similarly, for any edge set $E' \subseteq E$, we denote by $G\left[E'\right]$ the subgraph of $G$ induced by $E'$: $G\left[E'\right] = \left(V,E',\omega_{|E'}\right)$. Let $G\setminus E'$ be the graph deprived of edges $E' \subseteq E$: $G\setminus E' = \left(V,E\setminus E',\omega_{|E \setminus E'}\right)$. 


A \textit{simple path} (called simply a \emph{path} in the remainder) $P$ of $G$ is a sequence of pairwise different vertices $v_1\cdot v_2\cdots v_i\cdot v_{i+1} \cdots v_{\ell}$, with departure $v_1$ and arrival $v_{\ell}$, where $v_iv_{i+1} \in E$ for $1\le i\le \ell-1$. A path $P$ between vertices $u$ and $v$ is called a \emph{$(u,v)$-path}. We abuse notation: $v_1 \in P$ and $v_1v_2 \in P$ mean that vertex $v_1$ and edge $v_1v_2$ are on the path $P$, respectively. We denote by $P[v_i,v_j]$ the sub-path of $P$ obtained from picking the section of $P$ between $v_i$ and $v_j$.
In a $(u,v)$-\textit{walk}, however, vertices can be repeated. Moreover, an edge can be traversed several times in a walk, and in opposite directions. 

A \emph{simple cycle} $C$ of $G$ is a sequence of pairwise different vertices $v_1\cdot v_2 \cdots v_{\ell}\cdot v_1$ such that $v_iv_{i+1}\in E$ for every $1\le i \le \ell-1$, and $v_\ell v_1\in E$ as well.


Throughout the paper, we use classical Landau notation $O$, $o$, and $\Omega$.

\subsection{Problem definition and competitive analysis}

Let $G=(V,E,\omega)$ be a graph and $E_*\subsetneq E$ a set of blocked edges. A \textit{road map}, which is a feasible instance of the \kctp problem, is a 4-tuple $(G,s,t,E_*)$ such that $s$ and $t$ are connected in $G\setminus E_*$.
We can now formally introduce the $k$-Canadian Traveller Problem.

\begin{definition}[$k$-Canadian Traveller Problem, or \kctp]~
	
	\textbf{\emph{Input:}} A graph $G=(V,E,\omega)$, $s,t \in V$, a set $E_*$ of blocked edges whose identities are unknown and such that $|E_*|\le k$ and $(G,s,t,E_*)$ is a road map.
	
	\textbf{\emph{Objective:}} Traverse graph $G$ from $s$ to $t$ while minimizing the distance travelled.
\end{definition}

Guided by a strategy, the traveller produces an $(s,t)$-walk.
The set of blocked edges $E_*$ is a hidden input at the beginning of the walk. We say an edge is \textit{revealed} when one of its endpoints has already been visited. A \textit{discovered blocked edge} is a revealed edge which is blocked.
At any moment of the walk, we usually denote by $E_*'$ the set of discovered blocked edges. 
At any moment, we are in fact working on the graph $G\setminus E_*'$ as discovered blocked edges can be permanently withdrawn from the input graph $G$. Conversely, an \emph{open edge} is an edge which is not blocked. We say a path is open if it does not contain any blocked edges.


For any subset of blocked edges $F\subseteq E_*$ and two vertices $x,y$, let $d_F\left(x,y\right)$ be the cost of the shortest $(x,y)$-path in the graph $G\setminus F$. Independently from blocked edges, we also write $d_{\infty}(x,y)$ the infinite-norm distance (or $\ell_\infty$-distance) between $x$ and $y$, {\em i.e.} the maximum gap between either horizontal or vertical coordinates of both points. Hence, for a given vertex $x \in V$, the set of vertices $y$ at some fixed infinite-norm distance from $x$ form a square cycle in the grid embedding.
In order to introduce competitive analysis in the \kctp\ context, we define the optimal offline cost, which is the distance the traveller would have traversed knowing the blockages in advance.

\begin{definition}[Optimal offline path and cost]
    An \emph{optimal offline path}, denoted by $\popte$, is a shortest $(s,t)$-path in the graph $G\setminus E_*$.
    Its cost, the \emph{optimal offline cost}, is $\dopte = d_{E_*}\left(s,t\right)$.
    \label{def:offline}
\end{definition}

The \textit{competitive ratio} is defined in \cite{BoEl98}. We denote by $\dtr{A}\left(G,s,t,E_*\right)$ the distance traversed by the traveller guided by a strategy $A$ on the graph $G$ from the source $s$ to the target $t$ with blocked edges $E_*$. Since $s$ and $t$ are generally clear from the context, we might simply write $\dtr{A}\left(G,E_*\right)$. The same remark holds for the notation of competitive ratio, presented below.

\begin{definition}[Competitive ratio]
The competitive ratio $c_A(G,E_*)$ of $A$ over a road map $(G,s,t,E_*)$ is the ratio between $d_{A}^{\emph{\scriptsize{Tr}}}\left(G,E_*\right)$ and the optimal offline cost. Formally:
$$c_A(G,E_*) = \frac{d_{A}^{\emph{\scriptsize{Tr}}}\left(G,E_*\right)}{d_{\emph{\scriptsize{opt}}}}$$
\end{definition}

We can assume w.l.o.g.\ that all blocked edges $E_*$ of our road map $(G,s,t,E_*)$ are accessible from $s$ in $G$: formally, for any $(u,v) \in E_*$, there exists an $(s,u)$-path or an $(s,v)$-path in $G\setminus E_*$. Indeed, all blocked edges which are not accessible from $s$ play no role in the behaviour of the traveller since they cannot be discovered. Said differently, as we conduct here a worst-case competitive analysis, useless blocked edges do not appear in the worst instances of \kctp .


\subsection{Grids}

\subparagraph{Square grids, rectangular grids and solid grid graphs.}
Let $\mathcal{G}_{\mathbb{Z}^2}$ denote the infinite square grid graph, whose vertex set is $\mathbb{Z}^2$ and in which two vertices are adjacent if and only if their $\ell_1$-distance is one. A \emph{grid graph} is a finite induced subgraph of $\mathcal{G}_{\mathbb{Z}^2}$. 
We consider the natural planar embedding of a grid graph inherited from $\mathbb{Z}^2$. 

We distinguish two classes of grid graphs that will be of particular interest: rectangular grids and solid grid graphs (Figure~\ref{fig:SolidGrid}).

\begin{definition}[Rectangular grid]
A \emph{rectangular grid} is a grid graph induced by a vertex set of the form $\{0,\ldots,n\}\times\{0,\ldots,p\}$ for some $n,p\in\mathbb{N}$.
\end{definition}

Equivalently, a rectangular grid is the Cartesian product of two finite paths. Thus, the boundary of a rectangular grid forms a rectangle and all its bounded faces are unit squares. When $n=p$, the graph is a \emph{square grid}.
Solid grid graphs generalize rectangular grids by allowing the boundary to have an arbitrary shape while still excluding non-unit bounded faces.

\begin{definition}[Solid grid graph~\cite{UmLe97}]
A \emph{solid grid graph} (or simply \emph{solid grid}) is a grid graph in which every bounded face has unit area.
\end{definition}

Hence, every square grid is a rectangular grid, and each rectangular grid is a solid grid graph. Figure~\ref{subfig:SolidGrid} illustrates a solid grid graph whose boundary is not rectangular. 

\begin{figure}[ht]
\centering
\begin{subfigure}[b]{0.36\columnwidth}
\centering
\scalebox{0.45}{\begin{tikzpicture}

\def\n{11} 
\def\m{9} 

\foreach \I in {1,...,\n} {
    \foreach \J in {1,...,\m} {
        \node[vertex] (v-\I-\J) at (\I,\J) {};
    }
}

\foreach \I in {2,...,\n} {
    \foreach \J in {1,...,\m} {
        \pgfmathtruncatemacro{\Imo}{\I-1} 
        \draw[thin] (v-\I-\J) to (v-\Imo-\J);
    }
}

\foreach \I in {1,...,\n} {
    \foreach \J in {2,...,\m} {
        \pgfmathtruncatemacro{\Jmo}{\J-1} 
        \draw[thin] (v-\I-\J) to (v-\I-\Jmo);
    }
}


\end{tikzpicture}}
\caption{A rectangular grid}
\end{subfigure}
\begin{subfigure}[b]{0.63\columnwidth}
\centering
\scalebox{0.45}{\begin{tikzpicture}
\foreach \i in {0,...,5}{
	\draw[thin] (\i,0) -- (\i, 5);
	\draw[thin] (0,\i) -- (5,\i);
}

\foreach \i in {0,...,4}{
	\draw[thin] (6, \i+3) -- (10, \i+3);
	\draw[thin] (\i+6, 3) -- (\i+6, 7);
}

\foreach \j in {3,4,5}{
	\draw[thin] (5,\j) -- (6,\j);
}

\foreach \i in {0,1,2}{
	\draw[thin] (-5, \i) -- (0, \i);
}

\foreach \i in {3,4,5, 8}{
	\draw[thin] (-5, \i) -- (-3, \i);
}

\foreach \i in {6,7}{
	\draw[thin] (-5, \i) -- (-1, \i);
}

\foreach \i in {-5,-4,-3}{
	\draw[thin] (\i, 0) -- (\i, 8);
}

\foreach \i in {-2,-1}{
	\draw[thin] (\i, 0) -- (\i, 2);
	\draw[thin] (\i, 6) -- (\i, 7);
}

\draw[thin] (5, 0) -- (8,0);
\draw[thin] (7, 1) -- (8,1);
\draw[thin] (7, 0) -- (7,1);
\draw[thin] (8, 0) -- (8,1);


\foreach \i in {0,...,5}{
	\foreach \j in {0,...,5}{
        \node[vertex] (v-\i-\j) at (\i,\j) {};
	}
}

\foreach \i in {0,...,4}{
	\foreach \j in {0,...,4}{
        \node[vertex] (u-\i-\j) at (\i+6,\j+3) {};
	}
}

\foreach \i in {0,...,4}{
	\foreach \j in {0,...,2}{
        \node[vertex] (w-\i-\j) at (-1-\i,\j) {};
	}
}

\foreach \i in {2,3,4}{
	\foreach \j in {0,...,5}{
        \node[vertex] (t-\i-\j) at (-1-\i,\j+3) {};
	}
}

\foreach \i in {-1,-2}{
	\foreach \j in {6,7}{
        \node[vertex] (t-\i-\j) at (\i,\j) {};
	}
}

\node [vertex] (a) at (6, 0) {};

\foreach \i in {7,8}{
	\foreach \j in {0,1}{
        \node[vertex] (a-\i-\j) at (\i,\j) {};
	}
}
\end{tikzpicture}}
\caption{A solid grid}
\label{subfig:SolidGrid}
\end{subfigure}
\caption{Examples of grid graphs}
\label{fig:SolidGrid}
\end{figure}

\subparagraph{The case of weighted square grids and unit-weighted grid graphs.}

Since our goal is to assess the competitiveness of deterministic strategies on different types of grids, we begin by providing two cases which are (almost) equivalent to the worst-case instances of \kctp, as presented in~\cite{We08}. Concretely, any strategy dealing with some graph made up of $k+1$ vertex-disjoint $(s,t)$-paths cannot outperform the competitive ratio $2k+1$. In fact, such a vertex-disjoint path structure can be built either by assignning arbitrary weights on any sufficiently large grid, or by considering unit-weighted grid graphs with large holes.

\begin{theorem}[$\star$]
There is no deterministic strategy for the \kctp\ with competitive ratio strictly smaller than $2k-1$ on unit-weighted grid graphs. The same holds for arbitrarily-weighted solid grid graphs.
\label{thm:lower-bound-weight}
\end{theorem}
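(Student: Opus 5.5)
We reduce to Westphal's construction~\cite{We08}: $k+1$ internally vertex-disjoint $(s,t)$-paths, where the adversary blocks each of the first $k$ explored paths just before its end, so the traveller pays about $2k+1$ times the optimum. Grids cannot realise $k+1$ paths all meeting at $s$ and at $t$, since grid degrees are at most $4$. We therefore use "near-endpoints" and accept a small additive loss. This costs the constant, which is why the bound is $2k-1$ rather than $2k+1$.

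\emph{Unit-weighted grid graphs.} Fix a large length $L$. We build a grid graph made of $k+1$ long corridors (paths of unit edges, pairwise far apart and joined only at their ends). Each corridor leaves a small "hub" region around $s$ and enters a small hub region around $t$. Concretely, we take a short path from $s$ to $t$ region and attach $k+1$ disjoint loops of length $\approx L$ through large holes. Alternatively, we take $k+1$ parallel horizontal rows of length $L$, separated by empty rows so the induced subgraph has no extra edges, connected at both ends by vertical spines containing $s$ and $t$. The spine length is $O(k)$, negligible compared with $L$.

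The adversary strategy is as follows. Whenever the traveller first reaches the far end of a corridor (the vertex adjacent to the $t$-spine), the adversary declares the last edge of that corridor blocked. This holds for the first $k$ corridors the traveller completes, so the remaining corridor is open. Since an edge is revealed only upon reaching an endpoint, this is consistent.

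For the accounting, each failed corridor costs the traveller at least $2L$ (forth and back). The final corridor costs $L$. The total is $\ge (2k+1)L - O(k^2)$, versus $\dopt \le L + O(k)$. We must also handle a traveller who turns back midway; this only wastes distance without revealing anything, so a standard argument shows the ratio still tends to $2k+1$ minus spine overhead. Choosing $L \gg k^2$ and allowing slack, we get a ratio $> 2k-1$. The main obstacle is making the spines induced and planar without creating shortcuts while keeping $s$ and $t$ close to all corridors. Routing corridors as nested U-shapes around holes may be cleaner.

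\emph{Weighted solid grids.} We take a square grid with $s,t$ on it. We choose $k+1$ vertex-disjoint $(s,t)$-paths, or nearly so, since $s$ and $t$ have degree at most $4$. To fix this, we put $s$ and $t$ at short distance from $k+1$ "port" vertices. All edges on these paths get weight $1$ or a small weight, and all other edges get a huge weight $M$, effectively deleting them. The lengths are set so that each path has cost $\approx L$ with the heavy last edge structure mimicking Westphal. Sharing of the first few edges near $s$ and $t$ between paths costs only an additive $O(k)$ under unit weights there, or $O(\varepsilon)$ if those shared edges get weight $\varepsilon$. The same adversary then yields a ratio $\to 2k+1-o(1)\ge 2k-1$. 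Here the delicate point is that the degree bound forces paths to share initial edges, which is exactly why we only claim $2k-1$. Blocked edges must also be placed beyond the shared part, so blocking one path never disconnects the others.
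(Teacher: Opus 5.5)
Your proposal is correct and follows essentially the paper's proof. Your parallel-rows construction is exactly the paper's graph $H_{k,\ell}$: $k+1$ rows separated by empty rows and joined by two vertical spines, with the adversary blocking the last edge of each of the first $k$ corridors whose far end is reached, and the paper's choice is $\ell=(2k)^2$. The weighted case is handled the same way, by completing the structure to a rectangular grid and putting a huge weight on every added edge; the paper also spells out the one step hidden in your phrase ``effectively deleting them'': a strategy that traverses any heavy edge already pays at least $M$, which exceeds $(2k-1)$ times the optimal offline cost $L+O(k)$ once $M$ is large enough.
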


The consequence of this theorem is that, if we aim at achieving a competitive ratio $o(k)$ on certain grid families, then we have to consider either weight or topological restrictions. For this reason, in the remainder, we focus on unit-weighted solid grids.

\section{First insights for the study of unit-weighted solid grids} \label{sec:insight}

We present here some initial observations on the study \kctp\ grid instances. First, we assess the competitiveness of the \textsc{greedy} strategy, which was first introduced in~\cite{XuHuSuZh09}: in the context of unit-weighted square grids, we prove that its competitive ratio is asymptotically linear in $k$. Second, we provide a general asymptotic lower bound $\Omega(\sqrt{k})$ for the competitive ratio on unit-weighted square grids. 

\subsection{\textsc{Greedy} strategy}

The \textsc{Greedy} strategy is a natural algorithm proposed by Xu {\em et al.}~\cite{XuHuSuZh09}. In this paper, the authors provide several results and observations on the competitiveness of this strategy. Its principle is the following: at each step, follow the shortest $(p,t)$-path in the graph $G \setminus E_*'$, where $p$ is the current position of the traveller and $E_*'$ is the set of blocked edges already discovered by the traveller. \textsc{Greedy} is the strategy we could apply in a real-world transportation network, when we encounter a blocked road (accident, road renovations,\ldots) and we aim at computing a short detour to reach our destination. 

Nevertheless, on general graphs, its competitive ratio can be exponentially large in $k$~\cite{XuHuSuZh09}, which is very far from the optimal ratio $2k+1$, achieved by the other strategy \textsc{reposition}~\cite{We08}. One could think that \textsc{Greedy} is efficient for some network topologies, in particular grid-like networks, which are natural models for urban transportation. In~\cite{XuHuSuZh09}, the authors show that, on unit-weighted rectangular grids, it achieves competitive ratio $3$ under a very specific and strong restriction: the number of blockages is upper-bounded by $d_{\infty}(s,t)$.

We prove that this statement is no longer valid when $k$ increases, especially when $k$ becomes much larger than the initial $(s,t)$-distance. In this scenario, the competitive ratio of the \textsc{Greedy} is at least linear in $k$, which does not outperform asymptotically the optimal bound $2k+1$ of general graphs. Therefore, \textsc{Greedy} does not offer meaningful guarantees on grids when the number of blockages is unbounded.

\begin{theorem}
The competitive ratio of \textsc{greedy} strategy on unit-weighted square grids is at least $\frac{2}{5}k + \frac{1}{5}$.
\end{theorem}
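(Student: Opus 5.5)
We exhibit, for every $k$, a road map on a unit-weighted square grid on which \textsc{Greedy} travels at least $\frac{2}{5}k+\frac{1}{5}$ times the optimal offline cost. Since the bound is linear in $k$ while the optimum should be a constant, the aim is an instance with $\dopt$ constant (we will aim for $\dopt=5$). The blockages should lure \textsc{Greedy} down a long corridor, one step at a time, and force it back. Ties among shortest paths are assumed to be broken adversarially; if a specific tie-breaking rule is required, we perturb the construction so each greedy choice is unique. This may cost a constant, which the $\frac{1}{5}$ term absorbs.

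\textbf{Construction.} Place $s$ and $t$ at distance $1$ in a large square grid, say $s=(0,0)$ and $t=(1,0)$. Block the edge $st$. Then block the edges needed so that the only short open detour is on one side, of length about $5$; for instance, block the vertical edges at $s$ and $t$ toward one half-plane. From each position, greedy sees a shortest $(p,t)$-path of length $3$ that goes into the other half-plane, or along a row away from the optimal side.

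Next, place blockages along a corridor, roughly a ``ladder'' of two parallel rows. Every time the traveller advances one step, greedy still believes a detour of small remaining length exists just ahead. Each newly discovered blocked edge, revealed on arrival at a vertex, kills that detour and leaves a new one one unit further on. Each step consumes a constant number of blockages, about $2.5$ per unit of progress. This ratio is what yields the slope $\frac{2}{5}$: after roughly $\frac{k}{5}$ units of progress the budget is spent, and greedy must return. The round trip then costs about $\frac{2k}{5}\cdot\dopt$ in total.

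\textbf{Steps.} (1) Fix coordinates and list $E_*$ explicitly, with $|E_*|\le k$. (2) Verify feasibility by exhibiting the open path of length $\dopt$, and check that it avoids $E_*$. (3) Prove by induction on the step $i$ that, when the traveller stands at the $i$-th corridor vertex with discovered set $E_*'$, the unique or adversarial shortest $(p,t)$-path in $G\setminus E_*'$ leads to the $(i+1)$-th corridor vertex. This requires computing $d_{E_*'}(p,t)$ and showing that the alternative, turning back to the true path, is not shorter. (4) Sum the walk length: forward plus return plus the final $\dopt$. Then divide by $\dopt$ to get at least $\frac{2}{5}k+\frac{1}{5}$, handling the parity or rounding of $k$ with a small number of spare blockages.

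\textbf{Main obstacle.} The hard part is step (3): keeping the greedy path pointing forward while the distance back to the true route grows. Greedy only continues if the believed remaining distance ahead is less than going back. To achieve this, the corridor should bend so that, for instance, it spirals or runs parallel to the row containing $t$ at constant distance. Then the optimistic forward distance stays bounded (about $2$ or $3$), while the backward distance grows linearly. I would first try a straight corridor adjacent to $t$'s row, with blocked rungs revealed just in time, and count exactly how many blockages each unit of progress uses to confirm the $\frac{2}{5}$ constant.
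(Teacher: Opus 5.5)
\textbf{There is a genuine gap.} Your proposal never specifies $E_*$, and the quantitative reasoning meant to produce the slope $\frac25$ does not work.

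\emph{The blockage accounting is too expensive.} To get $\frac{2k+1}{5}$ with $\dopt=5$, greedy must walk at least $2k+1$, i.e.\ about two units of walk per blockage. With ``about $2.5$ blockages per unit of progress'' followed by a return trip, you get roughly $2\cdot\frac{k}{2.5}=\frac{4k}{5}$ units of walk, hence a ratio of about $\frac{4}{25}k$. Your intermediate figures ($\frac{k}{5}$ units of progress, then a round trip of $\frac{2k}{5}\cdot\dopt$) are also inconsistent with each other.

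\emph{The constant cannot be absorbed.} The statement is an exact inequality, so a constant lost to tie-breaking perturbations cannot be absorbed by the $\frac15$ term.

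\emph{The mechanism for step (3) is impossible.} Since $d_{E_*'}(p,t)\ge d_{\emptyset}(p,t)$, an optimistic forward distance bounded by $c$ confines the traveller to the $\ell_1$-ball of radius $c$ around $t$. There only $O(c^2)$ edges can ever be revealed, and between two discoveries greedy walks at most $c$ steps. So the whole walk has constant length, not length linear in $k$. The forward estimate has to grow with the progress; it only has to stay strictly below the backward estimate.

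\textbf{The missing idea} is a one-sided wall costing exactly one blockage per unit of progress, with the way back taken on the far side of the wall. Then each blockage contributes two units of walk. The paper sets this up as follows.
\begin{itemize}
\item Put $s,t$ at distance $3$ on one row with $P=(s,u_0,v_0,t)$. Let $u^-$ be the vertex below $u_0$, and $u_j,v_j$ the vertices $j$ rows above $u_0,v_0$.
\item Block $u_0v_0$, $u_0u^-$, and $u_jv_j$ for $1\le j\le k-2$; this is $k$ edges.
\item At $u_i$, after discovering $u_iv_i$, the greedy estimate via row $i+1$ is $i+4$. Any route crossing the wall at row $i+2$ or higher costs at least $i+6$. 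Any route crossing below row $0$ must go around $u_0u^-$ and also costs at least $i+6$.
\item Hence greedy's first step is forced upward, with no tie-breaking needed. Each subsequent step reveals exactly one new rung.
\item At $u_{k-1}$ the rung is open, so the traveller crosses and descends to $v_0$ and then $t$. The total walk is $1+(k-1)+1+(k-1)+1=2k+1$.
\item The open path through the row below $P$ has length $5$. Every $(s,t)$-path has odd length and $P$ is the only one of length $3$, so $\dopt=5$.
\item A square grid of side $N\ge 2k-1$ centred on $P$ is large enough.
\end{itemize}
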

\begin{proof}
Consider a square grid of dimension $N \times N$, with $N$ odd and $N \ge 2k-1$. Place $s$ and $t$ at distance $3$ on the same horizontal line, such that the center of the grid is on the $3$-edge shortest $(s,t)$-path $P$. We fix $P = (s,u_0,v_0,t)$ and $u^-$ the vertex located below $u_0$.

Following \textsc{Greedy}, the traveller traverses the edge $(s,u_0)$. Assume it discovers that both $(u_0,v_0)$ and $(u_0,u^-)$ are blocked edges. Hence, we recompute the shortest $(u_0,t)$-path in the graph $G \setminus E_*'$, with $E_*' = \{(u_0,v_0),(u_0,u^-)\}$. This path goes up and the traveller visits a first vertex $u_1$, just above $u_0$. Assume the horizontal edge $(u_1,v_1)$, which is parallel to $(u_0,v_0)$, is blocked. Again, we recompute the current shortest path which goes up. The traveller moves to $u_2$ but discovers that $(u_2,v_2)$ is blocked, and so on. With the \textsc{Greedy} strategy, this process continues until some edge $(u_{k-2},v_{k-2})$ is blocked. Then the traveller finds an open detour via $(u_{k-1},v_{k-1})$ (which exists as $N \ge 2k-1$) and reaches $t$ without encountering extra blockages. The journey of the traveller is highlighted with green arcs in Figure~\ref{subfig:greedy}.

In this situation, the distance traversed by the traveller is exactly $2(k-1)+3$, while $\dopt = 5$. So, the competitive ratio of \textsc{greedy} on square grids is at least $\frac{2}{5}k + \frac{1}{5}$.
\end{proof}

\subsection{Lower bound for the competitiveness on unit-weighted grids} \label{subsec:lower}

This section is dedicated to proving that no competitive ratio $o(\sqrt{k})$ can be expected on unit-weighted square grids. To establish this result, we consider a square grid with certain configurations of blocked edges.

\begin{theorem}
There is no deterministic strategy that achieves a competitive ratio $o(\sqrt{k})$ on unit-weighted square grids for the \kctp .
\end{theorem}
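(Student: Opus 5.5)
We plan an adversary argument on a large $N\times N$ unit-weighted square grid, with $N \gg k$. Put $s$ at the center and let $r=\lfloor \sqrt{k}/c\rfloor$ for a suitable constant $c$. The target $t$ is placed at distance about $r$ from $s$. The adversary will build, around $s$, an almost-closed ``wall'' of blocked edges, so that the traveller must explore a region of perimeter $\Theta(r)$ before it finds the one opening. Meanwhile the offline optimum stays $O(r)$, which keeps the ratio $\Omega(r)$, not $\Omega(1)$.

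The construction is as follows. Take $t$ adjacent to $s$, say $t$ just right of $s$. Consider the square cycle $C$ of vertices at $\ell_\infty$-distance $r$ from $s$. Block the edge $st$ and all edges joining the inner region to $C$ except a set of candidate exits, and also block all edges from $t$ except one leading outward. Counting edges crossing the square boundary gives $O(r)$ blocked edges. To make room for many candidate exits we instead use concentric layers. The cleaner version uses about $\sqrt{k}$ ``radial corridors'': partition the square annulus of width $r$ into $\Theta(r)$ positions, one per boundary edge of $C$. The budget $k=\Theta(r^2)$ is what allows the adversary to seal every boundary crossing of $C$ but one, and to isolate $t$ so that it is reachable only from outside $C$.

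The adversary's rule is to keep all crossings of $C$ blocked until the traveller has visited all but one of the $\Theta(r)$ exit positions on $C$. Every crossing is revealed only when the traveller stands at an endpoint, so the adversary can delay committing. The last unvisited position becomes the unique open exit. A deterministic traveller then walks along $C$ past $\Theta(r)$ positions whose endpoints are spread around a cycle of length $8r$. If it visits them in any order, it covers distance $\Omega(r)$ just touring $C$; that alone gives only a constant ratio. To get $\Omega(r)$ ratio, the exits must force either revisits or a long tour relative to $\dopt$. So we choose the offline path as short as possible: the adversary opens the exit nearest $t$ among late choices.

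I expect the right geometry to be a ``double comb''. Place $s$ and $t$ adjacent across a blocked vertical line segment of length $\Theta(r)$, made of $r$ blocked horizontal edges crossing it. Put $\Theta(r)$ short vertical gaps near the line, each guarded by $O(r)$ further blocks that create dead ends of depth $r$. Then every probe costs $\Omega(r)$, there are $\Theta(r)$ probes, and the total budget is $\Theta(r^2)=\Theta(k)$. The adversary reveals dead ends until the last one. The traveller pays $\Omega(r^2)$, while $\dopt=O(r)$, giving ratio $\Omega(r)=\Omega(\sqrt{k})$.

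The main obstacle is designing these dead-end gadgets so that the following conditions hold at once. First, the traveller can learn that a probe fails only by walking depth $\Omega(r)$; since blockages are seen from endpoints, each corridor must hide its blocked edge at the far end. Second, corridors must not share information or shortcuts. Third, the final open corridor must give an $s$–$t$ path of length $O(r)$. Once that geometry is fixed, the charging is routine: each unexplored corridor forces an additional $\Omega(r)$ walk.
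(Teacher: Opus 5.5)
\textbf{There is a genuine gap: the proposal never builds the instance.} Your accounting is right: $\Theta(\sqrt{k})$ dead ends of depth $\Theta(\sqrt{k})$, each costing $O(\sqrt{k})$ blockages, an adaptive adversary that opens only the last dead end the deterministic traveller reveals, and an offline path of length $O(\sqrt{k})$. But you never exhibit a blockage set that realizes this. You flag the gadget design as ``the main obstacle'' and stop there. That construction is essentially the whole proof; the charging afterwards is routine.

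What you do describe does not yet work. The square-cycle version gives only a constant ratio, as you note yourself. The ``double comb'' is too vague to check against the three conditions you list. In particular, in a grid with $N\gg k$ and $s$ in the centre, a blocked segment of length $\Theta(r)$ separating adjacent $s$ and $t$ can be walked around at its ends in $O(r)$ steps. To prevent that you would need a closed enclosure, and you would have to specify how the corridors attach to it and to $t$; none of this is given. Also drop the remark about opening ``the exit nearest $t$ among late choices'': the adversary must open exactly the last uncertain edge revealed.

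\textbf{The missing idea is that the corridor walls can themselves be blocked edges.} In a full square grid, make the walls out of blocked edges that are common to every configuration the adversary considers. Discovering them then gives the traveller no information, and only one edge per corridor, at its far end, is uncertain.

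The paper's construction is as follows. Take the grid $\{0,\dots,N\}^2$ with $s=(0,N)$ and $t=(N,0)$. Block every vertical edge with $x$-coordinate in $\{1,\dots,N-1\}$, which is $N(N-1)$ edges. Also block all but one of the $N+1$ horizontal edges between $x=N-1$ and $x=N$. So $k=N^2$; for general $k$, take $N=\lfloor\sqrt{k}\rfloor$.

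Every row is then a corridor of length $N-1$ hanging off the open left column. Its end-edge is revealed only at $(N-1,y)$. The open right column leads to $t$, so $\dopt=2N$. Let the open end-edge be the last one the strategy reveals. The traveller is then forced to make $N$ round trips of length $2(N-1)$, one final crossing of length $N$, and at least $N$ vertical steps. That is at least $2N^2$ in total, hence a ratio of at least $N=\sqrt{k}$.

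The grid boundary serves as the outer enclosure for free. Inside a larger grid, $O(N)$ extra blockages suffice to isolate the sub-square.
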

\begin{proof}

Let us consider all road maps $(G,s,t,E_*)$, where $G$ is a $(N+1) \times (N+1)$ square grid, $s$ is the top-left corner vertex, $t$ is the bottom-right corner vertex, and $E_*$ is a set of blocked edges of size at most $k$. We fix the source and target to be at coordinates $\ve{x}(s) = 0$ and $\ve{y}(s) = N$; $\ve{x}(t) = N$ and $\ve{y}(t) = 0$. We restrict ourselves to blockage sets $E_*$ containing:
\begin{itemize}
    \item all vertical edges, except the ones present on the leftmost vertical line (the $\ve{x}$-coordinate is equal to 0) and on the rightmost one (the $\ve{x}$-coordinate is equal to $N$)
    \item all horizontal edges adjacent to the rightmost vertical line, except an arbitrary one.
\end{itemize}

\begin{figure}[t]
	\centering
	\begin{subfigure}[b]{0.5\columnwidth}
	\centering
	\scalebox{0.6}{\begin{tikzpicture}

\def\l{15} 
\pgfmathtruncatemacro{\lmo}{\l - 1};
\pgfmathsetmacro{\H}{\lmo / 2};
\pgfmathsetmacro{\B}{6 - 0.6 * \H};
\pgfmathsetmacro{\T}{\B + 0.6 * \l};
\pgfmathsetmacro{\Tm}{\T - 0.6};

\node[terminal] (s) at (6,6) {};
\draw (s) node[below left,yshift = 2mm, scale=2,color = orange] {$s$};

\node[terminal] (t) at (7.8,6) {};
\draw (t) node[below right,yshift = 2mm,scale=2,color = orange] {$t$};

\draw[<->, dotted, line width = 2pt, color = gray] (1.5,\B) to (1.5,\T);
\node[color = gray, scale = 1.5] at (1,6.5) {$N$};
\draw[<->, dotted, line width = 2pt, color = white!40!red] (8.4,6) to (8.4,\Tm);
\node[color = red, scale = 1.5] at (9.3,8) {$k-1$};

\foreach \I in {0,...,\H} {
	\pgfmathsetmacro{\J}{6 + (\I * 0.6)};
   \node[vertex] (u-\I) at (6.6,\J) {};
   \node[vertex] (v-\I) at (7.2,\J) {};
   \draw[tblocked] (u-\I) to (v-\I);
}

\foreach \I in {1,...,\H} {
	\pgfmathsetmacro{\J}{6 + (\I * 0.6)};
   \node[vertex] (s-\I) at (6,\J) {};
   \node[vertex] (t-\I) at (7.8,\J) {};
   \draw[thin] (s-\I) to (u-\I);
   \draw[thin] (v-\I) to (t-\I);
}

\foreach \I in {1,...,\H} {
	\pgfmathtruncatemacro{\Imo}{\I - 1};
   \draw[thin, color = black!30!green, <-] (u-\I) to (u-\Imo);
   \draw[thin, color = black!30!green, ->] (v-\I) to (v-\Imo);
}

\draw[thin, color = black!30!green, ->] (s) to (u-0);
\draw[thin, color = black!30!green, <-] (t) to (v-0);

\node[vertex] (sm) at (6,5.4) {};
\node[vertex] (um) at (6.6,5.4) {};
\node[vertex] (vm) at (7.2,5.4) {};
\node[vertex] (tm) at (7.8,5.4) {};

\draw[thin] (s) to (sm);
\draw[tblocked] (u-0) to (um);
\draw[thin] (v-0) to (vm);
\draw[thin] (t) to (tm);

\draw[thin] (sm) to (um);
\draw[thin] (um) to (vm);
\draw[thin] (vm) to (tm);

\node[scale = 3] (d1) at (4.2,6) {$\ldots$};
\node[scale = 3] at (4.2,9) {$\ldots$};
\node[scale = 3] (d2) at (9.6,6) {$\ldots$};
\node[scale = 3] at (9.6,9) {$\ldots$};
\node[scale=3, rotate=90] at (6.9,4) {$\ldots$};

\node[vertex] (a-0) at (2.4,\B) {};
\foreach \I in {1,...,\l} {
	\pgfmathsetmacro{\J}{\B + (0.6 * \I)};
	\pgfmathtruncatemacro{\Imo}{\I - 1};
	\node[vertex] (a-\I) at (2.4,\J) {};
	\draw[thin] (a-\I) -- (a-\Imo);
}
\node[vertex] (z-0) at (11.4,\B) {};
\foreach \I in {1,...,\l} {
	\pgfmathsetmacro{\J}{\B + 0.6 * \I};
	\pgfmathtruncatemacro{\Imo}{\I - 1};
	\node[vertex] (z-\I) at (11.4,\J) {};
	\draw[thin] (z-\I) -- (z-\Imo);
}

\node[vertex] (t-0) at (2.4,\T) {};
\foreach \I in {1,...,\l} {
	\pgfmathsetmacro{\x}{2.4 + 0.6 * \I};
	\pgfmathtruncatemacro{\Imo}{\I - 1};
	\node[vertex] (t-\I) at (\x , \T) {};
	\draw[thin] (t-\I) -- (t-\Imo);
}

\node[vertex] (b-0) at (2.4,\B) {};
\foreach \I in {1,...,\l} {
	\pgfmathsetmacro{\x}{2.4 + 0.6 * \I};
	\pgfmathtruncatemacro{\Imo}{\I - 1};
	\node[vertex] (b-\I) at (\x , \B) {};
	\draw[thin] (b-\I) -- (b-\Imo);
}

\draw[thin, color = black!30!green, <-] (t-7) -- (u-7);
\draw[thin, color = black!30!green, ->] (t-8) -- (v-7);
\draw[thin, color = black!30!green, ->] (t-7) -- (t-8);

\end{tikzpicture}}
	\caption{Road map on which the \textsc{Greedy} strategy achieves ratio $\Omega(k)$.}
	\label{subfig:greedy}
	\end{subfigure}
	\qquad
	\begin{subfigure}[b]{0.40\columnwidth}
	\centering
	\scalebox{0.6}{\begin{tikzpicture}

\def\n{9} 
\def\m{9} 
\pgfmathtruncatemacro{\nmo}{\n-1}

\foreach \I in {1,...,\n} {
    \foreach \J in {1,...,\m} {
        \node[vertex] (v-\I-\J) at (\I,\J) {};
    }
}

\draw[line width = 7pt, color=white!40!orange] (v-1-\m) -- (1,3) -- (\n,3) -- (v-\n-1);

\foreach \I in {2,...,\nmo} {
    \foreach \J in {1,...,\m} {
        \pgfmathtruncatemacro{\Imo}{\I-1} 
        \draw[free] (v-\I-\J) to (v-\Imo-\J);
    }
}

\foreach \I in {2,...,\nmo} {
    \foreach \J in {2,...,\m} {
        \pgfmathtruncatemacro{\Jmo}{\J-1} 
        \draw[tblocked] (v-\I-\J) to (v-\I-\Jmo);
    }
}

\foreach \J in {1,...,2} {
    \draw[tblocked] (v-\n-\J) to (v-\nmo-\J);
}

\draw[free] (v-\n-3) to (v-\nmo-3);

\foreach \J in {4,...,\n} {
    \draw[tblocked] (v-\n-\J) to (v-\nmo-\J);
}

\foreach \J in {2,...,\m} {
    \pgfmathtruncatemacro{\Jmo}{\J-1} 
    \draw[free] (v-1-\J) to (v-1-\Jmo);
    \draw[free] (v-\n-\J) to (v-\n-\Jmo);
}

\node[terminal] (s) at (1,\n) {};
\draw (s) node[above left,scale=2,color = orange] {$s$};
\node[terminal] (t) at (\n,1) {};
\draw (t) node[above right,scale=2,color = orange] {$t$};

\end{tikzpicture}}
	\caption{Road map on which any strategy achieves at least ratio $\Omega(\sqrt{k})$.}
	\label{subfig:lowerbound}
	\end{subfigure}
	\caption{Lower bounds instances of Section~\ref{sec:insight}}
	\label{fig:lowerbound}
\end{figure}

An example of such a road map is given in Figure~\ref{fig:lowerbound}. The blocked edges are drawn in red, and the open edges appear with a thick black line. The optimal offline path is highlighted in orange. A first remark is that the number of blocked edges is exactly $N(N-1) + N = N^2$, so these road maps are \kctp\ instances for $k = N^2$. A second comment is that, thanks to the single open horizontal edge on the rightmost column, there is an $(s,t)$-path in graph $G\setminus E_*$ and the shortest one has length $2N$.

Consider any deterministic strategy $A$: it must, at some moment, reach the rightmost vertical line, {\em i.e.} a vertex $v$ such that $\ve{x}(v) = N$. To do so, it makes the traveller traverse a horizontal edge adjacent to this rightmost vertical line. Consider the configuration where the single open edge among these horizontal edges corresponds to the last attempt made by strategy $A$. Said differently, the traveller discovers $N$ horizontal blockages before reaching the rightmost vertical line. Because of the vertical blockages forcing the traveller to traverse the length of the grid in order to reach the rightmost vertical line, they have no choice but retrace their steps towards the leftmost vertical line.

For these reasons, the distance traversed by the traveller guided by strategy $A$ is at least $N$ (distance needed to attain the $\ve{y}$-coordinate equal to $0$) plus $2N(N-1) + N$ for all the horizontal paths traversed. Hence, the total distance traversed is at least $2N^2$. Consequently, the competitive ratio of strategy $A$ on these road maps is at least $\frac{2N^2}{2N} = N =\sqrt{k}$.
\end{proof}


Although we assume that $s$ and $t$ lie at opposite corners of the square, the proof extends to arbitrary positions by adding at most $2N$ blockages to restrict the accessible area. This does not affect the asymptotic analysis, since the distance traversed is $\Omega(N^2)$. We now prove that the resulting $\sqrt{k}$ competitive ratio is optimal for unit-weighted solid grid graphs.

\section{Competitive analysis when the grid size is small} \label{sec:strategy}

Let us consider now, as announced, $G$ as a solid grid graph, with $s,t \in V(G)$, and $k$ a positive integer. This defines the visible part of a $k$-CTP grid instance. Let $N = d_{\emptyset}(s,t)$. 
We also fix $E_*$ as a set of blocked edges to be discovered by the traveller, $\vert E_* \vert \le k$. 

Our first objective in this section is to introduce the strategy \flw. 
Then, we prove that its competitive ratio on solid grid graphs is $O(\sqrt{k})$ when the total size of the grid $\vert E \vert$ is at most $\alpha \cdot N^2$, where $\alpha$ is any constant positive value, {\em i.e.} $\alpha = O(1)$. The case where the grid size is arbitrarily larger than $N$ will be investigated later (Section~\ref{sec-double}).


\subsection{Description of the \flw\ strategy}


We introduce some vertex coloring, which depends on the set $E_*$. 

\begin{definition}
A vertex is called \emph{red} if it is an endpoint of an edge in $E_*$, and \emph{blue} if it is not red but adjacent to a red vertex. We fix $R$ and $B$ the sets of red and blue vertices, respectively. We say a \emph{colored} vertex is either blue or red.

The \emph{red-blue graph} is the subgraph of $G\setminus E_*$ induced by
$R\cup B$. For a colored vertex $v$, we denote by $\rb(v)$ the connected
component of $v$ in the red-blue graph.
\label{def:red-blue}
\end{definition}

Naturally, the red-blue graph is not necessarily connected, as blocked edges can be located anywhere in graph $G$. Hence, it admits connected components which will be treated quite independently by our algorithm \flw .

\begin{observation}
\label{obs:rb-disjoint}
Any two distinct red-blue components are vertex-disjoint. In particular,
if $u$ and $v$ are colored vertices, then either
$\rb(u)=\rb(v)$ or $\rb(u)\cap \rb(v)=\emptyset$.
\end{observation}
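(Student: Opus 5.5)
This observation is a direct consequence of the fact that the sets $\rb(v)$ are, by Definition~\ref{def:red-blue}, connected components of a single fixed graph, namely the red-blue graph $H$, which is the subgraph of $G\setminus E_*$ induced by $R\cup B$. The plan is to recall that the connected components of any graph partition its vertex set. This holds because the relation ``there is a path in $H$ between $x$ and $y$'' is an equivalence relation on $V(H)=R\cup B$. Reflexivity comes from the trivial path, symmetry from reversing a path, and transitivity from concatenating two walks and extracting a simple path from the resulting walk. The components are exactly the equivalence classes.

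With this in hand, take colored vertices $u,v$ and suppose $\rb(u)\cap\rb(v)\neq\emptyset$, say $w\in\rb(u)\cap\rb(v)$. Then $u$ is connected to $w$ in $H$ and $w$ is connected to $v$ in $H$. By transitivity, $u$ and $v$ are connected, so they lie in the same equivalence class, hence $\rb(u)=\rb(v)$. This gives the dichotomy for colored vertices. The first sentence, that two distinct red-blue components are vertex-disjoint, follows by applying the same argument to arbitrary representatives of the two components.

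There is essentially no obstacle. The only point worth stating explicitly is that the red-blue graph is defined once, from the fixed set $E_*$, rather than relative to each vertex. Consequently all the $\rb(\cdot)$ are components of the same graph, and the partition property applies to them. One should also note that a colored vertex $v$ belongs to $R\cup B=V(H)$, so $\rb(v)$ is well defined and contains $v$.
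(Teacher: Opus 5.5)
Your proof is correct and matches what the paper relies on: the paper states this observation without proof, taking it as immediate that the sets $\rb(u)$ are connected components of the single red-blue graph and that the components of any graph partition its vertex set. Your explicit remark that this graph is fixed by $E_*$, and not by the incrementally discovered blockages, is the one point worth stating.
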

%

%
%
%


We now describe our strategy \flw . The strategy starts by fixing a shortest $(s,t)$-path $P=(v_0=s,v_1,\ldots,v_q=t)$ in $G$. The traveller traverses $P$, whenever edges are open. If it encounters a blockage
$v_jv_{j+1}\in E_*$ that has not yet been discovered, it initiates an
exhaustive exploration of $\rbm(v_j)$ following a Depth-First Search (DFS). We will show in Theorem~\ref{thm:rb-routing}, presented in the next subsection, that this exploration
produces an open (colored) path from $v_j$ to a later vertex $v_\ell$ of $P$, $\ell > j$, such
that either $v_\ell=t$ or $v_\ell v_{\ell+1}\notin E_*$. The strategy follows the open colored path inside $\rbm(v_j)$ to such a $v_\ell$ which is as close as possible from $t$. Then, the traveller restarts the walk towards $t$ on $P$, until finding a new red-blue component.

\begin{algorithm}[ht]
\caption{\textsc{RB-DFS}}\label{algo:dfs}
\KwIn{Grid $G$, target $t$, blockages already discovered $E_*' \subsetneq E$, position $v_i$, path $P$}
$T \leftarrow$ tree containing a single vertex $v_i$; $\textsf{parent} \gets v_i$\tcp*{DFS tree}
Let $\textsf{stack}$ be the DFS stack; Push$(\textsf{stack},v_i)$\tcp*{DFS stack}
Label all vertices as undiscovered\;
\While{\textsf{stack} is nonempty }{
	$x \gets$ Pop$(\textsf{stack})$\;
	\If{$x$ is not labeled as discovered}{
		add vertex $x$ and edge $(x,\textsf{parent})$ to $T$; $\textsf{parent} \gets x$\;
		the traveller goes to $x$ (if not already there) via $T$; Label $x$ as discovered\;
		Decide the color of $x$ (visit the four neighbors if necessary) \label{line-decide}\;
        \If{$x$ is colored}{
            \For{all neighbors $y$ of $x$ in $G \setminus E_*'$ not already present in \textsf{stack}}{
            Push$(\textsf{stack},y)$}
        }
	}
}
Update $E_*'$ with all the blockages discovered during the exploration\;
$\ell \gets i$ \tcp*{Determining the vertex of $\rb(v_i) \cap P$ which is the closest to $t$}
\For{all vertices $v_j \in T \cap P$}{
	\lIf{$d_{E_*'}(v_j,t) < d_{E_*'}(v_\ell,t)$ and $v_jv_{j+1} \notin E_*$}{$\ell = j$}
}
\lIf{$\ell = i$}{\Return{``Error: no $(s,t)$-path in $G\setminus E_*$''} \label{line-error}}
The traveller goes to $v_{\ell}$ following tree $T$; \Return{$v_{\ell}$} \label{line:return-dfs}\;
\end{algorithm}

Before giving the full pseudocode of \flw , we define the subroutine called \textsc{RB-DFS} (Algorithm~\ref{algo:dfs}). This procedure guides the traveller in exploring a red-blue component, following a DFS tree. It takes as input the graph $G$ with a set of already discovered blocked edges $E_*'$, a current position $v_i$ and the $(s,t)$-path $P$. It makes the traveller visit all vertices of $\rbm(v_i)$ and returns the closest-to-$t$ vertex $v_{\ell}$, $\ell > i$, belonging both to $P$ and $\rbm(v_i)$. \textsc{RB-DFS} cannot be seen as a classical DFS where all the information is known from the beginning of the traversal. Indeed, the traveller discovers little by little adjacent blockages and identifies whether vertices are red, blue, or uncolored. Hence, each vertex status and the content of the red-blue component is incrementally updated while the exploration goes on.

Concretely, the \flw\ strategy consists in repeating \textsc{RB-DFS} each time a blockage is discovered on the shortest $(s,t)$-path $P$. Indeed, each time \textsc{RB-DFS} is launched on a red-blue component $\rbm(v_i)$, it necessarily meets another vertex $v_{\ell} \in P$ such that the section between $v_{\ell}$ and $t$ on path $P$ is apparently open. And therefore, $v_{\ell}$ is necessarily closer to $t$ than $v_i$. This is a result we are going to state and prove later (Theorem~\ref{thm:rb-routing}). Note, as a technical detail, that for now, the pseudocode of \textsc{RB-DFS} is consistent, since we define $v_{\ell}$ as the closest-to-$t$ vertex of $\rbm(v_i) \cap P$: such a vertex necessarily exists as $v_i$ is candidate.

\begin{algorithm}
\caption{\flw}\label{algo:flw}
\KwIn{Grid $G$, source $s$, target $t$}
$P \gets$ any shortest $(s,t)$-path in $G$ \tcp*{Definition of $P = (s=v_0,v_1,\ldots,v_q=t)$}
$E_*' \gets \emptyset$; $i \gets 0$ \tcp*{Blocked edges discovered and position iterator}
\While{$i < q$}{
	\tcp{Run until reaching $t$}
	\While{$(v_i,v_{i+1})$ is open and $i < q$}{
		\For{all blockages $(v_i,u) \in E_*$ adjacent to $v_i$}{
		$E_*' \gets E_*' \cup \{(v_i,u)\}$ \tcp*{Adding the discovered blockage to $E_*'$}
		Color $v_i$ and $u$ in red, their neighbors in blue if they are not colored yet\;
		}
		Traverse $(v_i,v_{i+1})$; $i \gets i+1$\;
	}
	\lIf{$i = q$}{\Return{}}
	$v_{\ell} \gets \textsc{RB-DFS}(G,t,E_*',v_i,P)$ \tcp*{Explore $\rb(v_i)$ when $(v_i,v_{i+1})$ is blocked}
	$i \gets \ell$ \tcp*{Theorem~\ref{thm:rb-routing} shows that $\ell > i$}
}
\end{algorithm}

The pseudocode of \flw\ is provided in Algorithm~\ref{algo:flw}. Observe that, as long as path $P$ is open, the traveller advances on it and maintains both set $E_*'$ and color statuses.

This strategy \flw\ is illustrated in Figure~\ref{fig:flw}, which describes its different steps on a solid grid example. The position of the traveller is indicated by a green line surrounding the vertex. The traveller starts on vertex $s$ in Figure~\ref{subfig:flw1}, and the pre-computed path $P$ is highlighted in orange. Then, the traveller traverses path $P$ until reaching a blockage $(u,v) \in E_*$ on $P$ (Figure~\ref{subfig:flw2}). At this point, subroutine \textsc{RB-DFS} is launched: Figure~\ref{subfig:flw3} illustrates the beginning of the exploration of the red-blue component $\rbm(u)$ and extra blockages are discovered. Green arcs and numbers indicate in which order the vertices of $\rbm(u)$ are visited by the \textsc{RB-DFS}. Observe that, due to line~\ref{line-decide} of Algorithm~\ref{algo:dfs}, there are additional edges actually traversed by the traveller. Indeed, for each vertex visited, deciding its color amounts to exploring all its neighbors to check whether they are adjacent (or not) to a blockage.
The \textsc{RB-DFS} exploration goes on and is completed in Figure~\ref{subfig:flw4}, with all red and blue vertices $\rbm(u)$ explored. Then, the traveller moves in the DFS tree to join the vertex of $\rbm(u)$ which is the closest to $t$ (Figure~\ref{subfig:flw5}). The traveller continues their journey towards $t$ and is luckily not blocked in this example (Figure~\ref{subfig:flw6}).

\begin{figure}[ht]
	\centering
	\begin{subfigure}[b]{0.32\columnwidth}
	\scalebox{0.5}{\begin{tikzpicture}

\def\n{9} 
\def\m{9} 
\pgfmathtruncatemacro{\nmo}{\n-1}
\pgfmathtruncatemacro{\mmo}{\m-2}

\foreach \I in {4,...,7} {
    \foreach \J in {1,...,\m} {
        \node[vertex] (v-\I-\J) at (\I,\J) {};
    }
}
\foreach \J in {2,...,\m} {
    \node[vertex] (v-3-\J) at (3,\J) {};
    \node[vertex] (v-8-\J) at (8,\J) {};
}

\foreach \J in {4,...,\m} {
    \node[vertex] (v-1-\J) at (1,\J) {};
    \node[vertex] (v-2-\J) at (2,\J) {};
}

\foreach \J in {2,...,\mmo} {
    \node[vertex] (v-9-\J) at (9,\J) {};
}

\draw[line width = 5pt, color=white!40!orange] (2,8) -- (3,8) -- (4,8) -- (5,8) -- (6,8) -- (6,7) -- (6,6) -- (6,5) -- (6,4) -- (6,3)-- (6,2);

\foreach \I in {2,...,8} {
    \foreach \J in {4,...,\m} {
        \pgfmathtruncatemacro{\Imo}{\I-1} 
        \draw[thin] (v-\I-\J) to (v-\Imo-\J);
    }
}

\foreach \J in {4,...,7} {
        \draw[thin] (v-9-\J) to (v-8-\J);
}

\foreach \I in {4,...,9} {
        \pgfmathtruncatemacro{\Imo}{\I-1} 
        \draw[thin] (v-\I-3) to (v-\Imo-3);
        \draw[thin] (v-\I-2) to (v-\Imo-2);
}

\foreach \I in {5,...,7} {
        \pgfmathtruncatemacro{\Imo}{\I-1} 
        \draw[thin] (v-\I-1) to (v-\Imo-1);
}

\foreach \I in {4,...,7} {
    \foreach \J in {2,...,\m} {
        \pgfmathtruncatemacro{\Jmo}{\J-1} 
        \draw[thin] (v-\I-\J) to (v-\I-\Jmo);
    }
}

\foreach \J in {5,...,\m} {
    \pgfmathtruncatemacro{\Jmo}{\J-1} 
    \draw[thin] (v-1-\J) to (v-1-\Jmo);
    \draw[thin] (v-2-\J) to (v-2-\Jmo);
}

\foreach \J in {3,...,\m} {
    \pgfmathtruncatemacro{\Jmo}{\J-1} 
    \draw[thin] (v-3-\J) to (v-3-\Jmo);
    \draw[thin] (v-8-\J) to (v-8-\Jmo);
}

\foreach \J in {3,...,7} {
    \pgfmathtruncatemacro{\Jmo}{\J-1} 
    \draw[thin] (v-9-\J) to (v-9-\Jmo);
}

\node[terminal, draw = black] at (2,8) {};2
\node[terminal, scale = 1.2, draw = lime, line width = 4pt] (s) at (2,8) {};
\draw (s) node[above left,scale=2,color = orange] {$s$};

\node[terminal] (t) at (6,2) {};
\draw (t) node[below right,scale=2,color = orange] {$t$};

%
%
%

\end{tikzpicture}}
	\caption{Solid grid instance for \kctp\ with a pre-computed path $P$}
	\label{subfig:flw1}
	\end{subfigure}
	\begin{subfigure}[b]{0.32\columnwidth}
	\scalebox{0.5}{\begin{tikzpicture}

\def\n{9} 
\def\m{9} 
\pgfmathtruncatemacro{\nmo}{\n-1}
\pgfmathtruncatemacro{\mmo}{\m-2}

\foreach \I in {4,...,7} {
    \foreach \J in {1,...,\m} {
        \node[vertex] (v-\I-\J) at (\I,\J) {};
    }
}
\foreach \J in {2,...,\m} {
    \node[vertex] (v-3-\J) at (3,\J) {};
    \node[vertex] (v-8-\J) at (8,\J) {};
}

\foreach \J in {4,...,\m} {
    \node[vertex] (v-1-\J) at (1,\J) {};
    \node[vertex] (v-2-\J) at (2,\J) {};
}

\foreach \J in {2,...,\mmo} {
    \node[vertex] (v-9-\J) at (9,\J) {};
}

\draw[line width = 5pt, color=white!40!orange] (v-6-6) -- (6,5) -- (6,4) -- (6,3) -- (6,2);

\foreach \I in {2,...,8} {
    \foreach \J in {4,...,\m} {
        \pgfmathtruncatemacro{\Imo}{\I-1} 
        \draw[thin] (v-\I-\J) to (v-\Imo-\J);
    }
}

\foreach \J in {4,...,7} {
        \draw[thin] (v-9-\J) to (v-8-\J);
}

\foreach \I in {4,...,9} {
        \pgfmathtruncatemacro{\Imo}{\I-1} 
        \draw[thin] (v-\I-3) to (v-\Imo-3);
        \draw[thin] (v-\I-2) to (v-\Imo-2);
}

\foreach \I in {5,...,7} {
        \pgfmathtruncatemacro{\Imo}{\I-1} 
        \draw[thin] (v-\I-1) to (v-\Imo-1);
}

\foreach \I in {4,...,7} {
    \foreach \J in {2,...,\m} {
        \pgfmathtruncatemacro{\Jmo}{\J-1} 
        \draw[thin] (v-\I-\J) to (v-\I-\Jmo);
    }
}

\foreach \J in {5,...,\m} {
    \pgfmathtruncatemacro{\Jmo}{\J-1} 
    \draw[thin] (v-1-\J) to (v-1-\Jmo);
    \draw[thin] (v-2-\J) to (v-2-\Jmo);
}

\foreach \J in {3,...,\m} {
    \pgfmathtruncatemacro{\Jmo}{\J-1} 
    \draw[thin] (v-3-\J) to (v-3-\Jmo);
    \draw[thin] (v-8-\J) to (v-8-\Jmo);
}

\foreach \J in {3,...,7} {
    \pgfmathtruncatemacro{\Jmo}{\J-1} 
    \draw[thin] (v-9-\J) to (v-9-\Jmo);
}

\draw[->, color = black!30!green, line width = 2pt] (v-2-8) -- (v-3-8);
\draw[->, color = black!30!green, line width = 2pt] (v-3-8) -- (v-4-8);
\draw[->, color = black!30!green, line width = 2pt] (v-4-8) -- (v-5-8);
\draw[->, color = black!30!green, line width = 2pt] (v-5-8) -- (v-6-8);
\draw[->, color = black!30!green, line width = 2pt] (v-6-8) -- (v-6-7);
\node[scale=0.8,draw,circle,fill=white,draw=lime] at (6,7) {}; //six seven
\node[scale=1,draw,circle,fill=white,draw=lime, line width = 4pt] (v-6-7) at (6,7) {};
\draw[blocked] (v-6-7) -- (v-6-6);

\node[terminal] (s) at (2,8) {};
\draw (s) node[above left,scale=2,color = orange] {$s$};

\node[terminal] (t) at (6,2) {};
\draw (t) node[below right,scale=2,color = orange] {$t$};

%
%
%

\end{tikzpicture}}
	\caption{The traveller (in green) moves on $P$ but is blocked}
	\label{subfig:flw2}
	\end{subfigure}
	\begin{subfigure}[b]{0.32\columnwidth}
	\scalebox{0.5}{\begin{tikzpicture}

\def\n{9} 
\def\m{9} 
\pgfmathtruncatemacro{\nmo}{\n-1}
\pgfmathtruncatemacro{\mmo}{\m-2}

\foreach \I in {4,...,7} {
    \foreach \J in {1,...,\m} {
        \node[vertex] (v-\I-\J) at (\I,\J) {};
    }
}
\foreach \J in {2,...,\m} {
    \node[vertex] (v-3-\J) at (3,\J) {};
    \node[vertex] (v-8-\J) at (8,\J) {};
}

\foreach \J in {4,...,\m} {
    \node[vertex] (v-1-\J) at (1,\J) {};
    \node[vertex] (v-2-\J) at (2,\J) {};
}

\foreach \J in {2,...,\mmo} {
    \node[vertex] (v-9-\J) at (9,\J) {};
}

\draw[line width = 5pt, color=white!40!orange] (v-6-6) -- (6,5) -- (6,4) -- (6,3) -- (6,2);

\foreach \I in {2,...,8} {
    \foreach \J in {4,...,\m} {
        \pgfmathtruncatemacro{\Imo}{\I-1} 
        \draw[thin] (v-\I-\J) to (v-\Imo-\J);
    }
}

\foreach \J in {4,...,7} {
        \draw[thin] (v-9-\J) to (v-8-\J);
}

\foreach \I in {4,...,9} {
        \pgfmathtruncatemacro{\Imo}{\I-1} 
        \draw[thin] (v-\I-3) to (v-\Imo-3);
        \draw[thin] (v-\I-2) to (v-\Imo-2);
}

\foreach \I in {5,...,7} {
        \pgfmathtruncatemacro{\Imo}{\I-1} 
        \draw[thin] (v-\I-1) to (v-\Imo-1);
}

\foreach \I in {4,...,7} {
    \foreach \J in {2,...,\m} {
        \pgfmathtruncatemacro{\Jmo}{\J-1} 
        \draw[thin] (v-\I-\J) to (v-\I-\Jmo);
    }
}

\foreach \J in {5,...,\m} {
    \pgfmathtruncatemacro{\Jmo}{\J-1} 
    \draw[thin] (v-1-\J) to (v-1-\Jmo);
    \draw[thin] (v-2-\J) to (v-2-\Jmo);
}

\foreach \J in {3,...,\m} {
    \pgfmathtruncatemacro{\Jmo}{\J-1} 
    \draw[thin] (v-3-\J) to (v-3-\Jmo);
    \draw[thin] (v-8-\J) to (v-8-\Jmo);
}

\foreach \J in {3,...,7} {
    \pgfmathtruncatemacro{\Jmo}{\J-1} 
    \draw[thin] (v-9-\J) to (v-9-\Jmo);
}

\node[scale=0.8,draw,circle,fill=blue,draw=lime] at (6,8) {};
\node[scale=1,draw,circle,fill=blue,draw=lime, line width  = 3pt] (v-6-8) at (6,8) {};
\node[scale=0.8,draw,circle,fill=blue,draw=black] (v-7-8) at (7,8) {};
\node[scale=0.8,draw,circle,fill=blue,draw=black] (v-8-8) at (8,8) {};
\node[scale=0.8,draw,circle,fill=blue,draw=black] (v-5-7) at (5,7) {};


\draw[->, color = black!30!green, line width = 2pt] (v-6-7) to node[above, color = black!30!green, scale = 0.9] {1} (v-7-7);
\draw[->, color = black!30!green, line width = 2pt] (v-7-7) to node[above, color = black!30!green, scale = 0.9] {2} (v-8-7);
\draw[<->, color = black!30!green, line width = 2pt] (v-8-7) to node[above, color = black!30!green, scale = 0.9] {3} (v-9-7);
\draw[->, color = black!30!green, line width = 2pt] (v-8-7) to node[right, color = black!30!green, scale = 0.9] {4} (v-8-8);
\draw[->, color = black!30!green, line width = 2pt] (v-8-8) to node[above, color = black!30!green, scale = 0.9] {5} (v-7-8);
\draw[->, color = black!30!green, line width = 2pt] (v-7-8) to node[above, color = black!30!green, scale = 0.9] {6} (v-6-8);

\foreach \I in {6,...,\n} {
  \node[scale=0.8,draw,circle,fill=red,draw=black] (v-\I-7) at (\I,7) {};
}


\draw[blocked] (v-6-7) -- (v-6-6);
\draw[blocked] (v-7-7) -- (v-7-6);
\draw[blocked] (v-8-7) -- (v-8-6);
\draw[blocked] (v-9-7) -- (v-9-6);

\node[terminal] (s) at (2,8) {};
\draw (s) node[above left,scale=2,color = orange] {$s$};

\node[terminal] (t) at (6,2) {};
\draw (t) node[below right,scale=2,color = orange] {$t$};

\end{tikzpicture}}
	\caption{The traveller starts exploring the red-blue component}
	\label{subfig:flw3}
	\end{subfigure}
	
	\begin{subfigure}[b]{0.32\columnwidth}
	\scalebox{0.5}{\begin{tikzpicture}

\def\n{9} 
\def\m{9} 
\pgfmathtruncatemacro{\nmo}{\n-1}
\pgfmathtruncatemacro{\mmo}{\m-2}

\foreach \I in {4,...,7} {
    \foreach \J in {1,...,\m} {
        \node[vertex] (v-\I-\J) at (\I,\J) {};
    }
}
\foreach \J in {2,...,\m} {
    \node[vertex] (v-3-\J) at (3,\J) {};
    \node[vertex] (v-8-\J) at (8,\J) {};
}

\foreach \J in {4,...,\m} {
    \node[vertex] (v-1-\J) at (1,\J) {};
    \node[vertex] (v-2-\J) at (2,\J) {};
}

\foreach \J in {2,...,\mmo} {
    \node[vertex] (v-9-\J) at (9,\J) {};
}

\draw[line width = 5pt, color=white!40!orange] (v-6-5) -- (6,4) -- (6,3) -- (6,2);

\foreach \I in {2,...,8} {
    \foreach \J in {4,...,\m} {
        \pgfmathtruncatemacro{\Imo}{\I-1} 
        \draw[thin] (v-\I-\J) to (v-\Imo-\J);
    }
}

\foreach \J in {4,...,7} {
        \draw[thin] (v-9-\J) to (v-8-\J);
}

\foreach \I in {4,...,9} {
        \pgfmathtruncatemacro{\Imo}{\I-1} 
        \draw[thin] (v-\I-3) to (v-\Imo-3);
        \draw[thin] (v-\I-2) to (v-\Imo-2);
}

\foreach \I in {5,...,7} {
        \pgfmathtruncatemacro{\Imo}{\I-1} 
        \draw[thin] (v-\I-1) to (v-\Imo-1);
}

\foreach \I in {4,...,7} {
    \foreach \J in {2,...,\m} {
        \pgfmathtruncatemacro{\Jmo}{\J-1} 
        \draw[thin] (v-\I-\J) to (v-\I-\Jmo);
    }
}

\foreach \J in {5,...,\m} {
    \pgfmathtruncatemacro{\Jmo}{\J-1} 
    \draw[thin] (v-1-\J) to (v-1-\Jmo);
    \draw[thin] (v-2-\J) to (v-2-\Jmo);
}

\foreach \J in {3,...,\m} {
    \pgfmathtruncatemacro{\Jmo}{\J-1} 
    \draw[thin] (v-3-\J) to (v-3-\Jmo);
    \draw[thin] (v-8-\J) to (v-8-\Jmo);
}

\foreach \J in {3,...,7} {
    \pgfmathtruncatemacro{\Jmo}{\J-1} 
    \draw[thin] (v-9-\J) to (v-9-\Jmo);
}

\node[scale=0.8,draw,circle,fill=blue,draw=black] (v-6-8) at (6,8) {};
\node[scale=0.8,draw,circle,fill=blue,draw=black] (v-7-8) at (7,8) {};
\node[scale=0.8,draw,circle,fill=blue,draw=black] (v-8-8) at (8,8) {};
\node[scale=0.8,draw,circle,fill=blue,draw=black] (v-3-7) at (3,7) {};
\node[scale=0.8,draw,circle,fill=blue,draw=black] (v-3-6) at (3,6) {};
\node[scale=0.8,draw,circle,fill=blue,draw=black] (v-5-5) at (5,5) {};
\node[scale=0.8,draw,circle,fill=blue,draw=black] at (4,5) {};
\node[scale=1,draw,circle,fill=blue,draw=lime,line width = 3pt] (v-4-5) at (4,5) {};
\node[scale=0.8,draw,circle,fill=blue,draw=black] (v-7-5) at (7,5) {};
\node[scale=0.8,draw,circle,fill=blue,draw=black] (v-8-5) at (8,5) {};
\node[scale=0.8,draw,circle,fill=blue,draw=black] (v-9-4) at (9,4) {};
\node[scale=0.8,draw,circle,fill=blue,draw=black] (v-4-8) at (4,8) {};
\node[scale=0.8,draw,circle,fill=blue,draw=black] (v-5-8) at (5,8) {};
\node[scale=0.8,draw,circle,fill=blue,draw=black] (v-6-4) at (6,4) {};


\draw[->, color = black!30!green, line width = 2pt] (v-6-7) to node[above, color = black!30!green, scale = 0.9] {1} (v-7-7);
\draw[->, color = black!30!green, line width = 2pt] (v-7-7) to node[above, color = black!30!green, scale = 0.9] {2} (v-8-7);
\draw[<->, color = black!30!green, line width = 2pt] (v-8-7) to node[above, color = black!30!green, scale = 0.9] {3} (v-9-7);
\draw[->, color = black!30!green, line width = 2pt] (v-8-7) to node[right, color = black!30!green, scale = 0.9] {4} (v-8-8);
\draw[->, color = black!30!green, line width = 2pt] (v-8-8) to node[above, color = black!30!green, scale = 0.9] {5} (v-7-8);
\draw[->, color = black!30!green, line width = 2pt] (v-7-8) to node[above, color = black!30!green, scale = 0.9] {6} (v-6-8);
\draw[->, color = black!30!green, line width = 2pt] (v-6-8) to node[above, color = black!30!green, scale = 0.9] {7} (v-5-8);
\draw[->, color = black!30!green, line width = 2pt] (v-5-8) to node[right, color = black!30!green, scale = 0.9] {8} (v-5-7);
\draw[->, color = black!30!green, line width = 2pt] (v-5-7) to node[above, color = black!30!green, scale = 0.9] {9} (v-4-7);
\draw[<->, color = black!30!green, line width = 2pt] (v-4-7) to node[right, color = black!30!green, scale = 0.9] {10} (v-4-8);
\draw[->, color = black!30!green, line width = 2pt] (v-4-7) to node[above, color = black!30!green, scale = 0.9] {11} (v-3-7);
\draw[->, color = black!30!green, line width = 2pt] (v-3-7) to node[right, color = black!30!green, scale = 0.9] {12} (v-3-6);
\foreach \I in {3,...,6}{
	\pgfmathtruncatemacro{\Ip}{\I+1}
	\pgfmathtruncatemacro{\L}{\I+10}
	\draw[->, color = black!30!green, line width = 2pt] (v-\I-6) to node[above, color = black!30!green, scale = 0.9] {\L} (v-\Ip-6);
}
\draw[->, color = black!30!green, line width = 2pt] (v-7-6) to node[right, color = black!30!green, scale = 0.9] {17} (v-7-5);
\draw[<->, color = black!30!green, line width = 2pt] (v-7-5) to node[above, color = black!30!green, scale = 0.9] {18} (v-8-5);
\draw[<->, color = black!30!green, line width = 2pt] (v-8-5) to node[above, color = black!30!green, scale = 0.9] {19} (v-9-5);
\draw[<->, color = black!30!green, line width = 2pt] (v-9-5) to node[right, color = black!30!green, scale = 0.9] {20} (v-9-4);
\draw[->, color = black!30!green, line width = 2pt] (v-7-5) to node[above, color = black!30!green, scale = 0.9] {21} (v-6-5);
\draw[<->, color = black!30!green, line width = 2pt] (v-6-5) to node[right, color = black!30!green, scale = 0.9] {22} (v-6-4);
\draw[->, color = black!30!green, line width = 2pt] (v-6-5) to node[above, color = black!30!green, scale = 0.9] {23} (v-5-5);
\draw[->, color = black!30!green, line width = 2pt] (v-5-5) to node[above, color = black!30!green, scale = 0.9] {24} (v-4-5);

\foreach \I in {4,...,\n} {
  \node[scale=0.8,draw,circle,fill=red,draw=black] (v-\I-7) at (\I,7) {};
}
\foreach \I in {4,...,8} {
  \node[scale=0.8,draw,circle,fill=red,draw=black] (v-\I-6) at (\I,6) {}; //six seven
}

\node[scale=0.8,draw,circle,fill=red,draw=black] (v-6-5) at (6,5) {};
\node[scale=0.8,draw,circle,fill=red,draw=black] (v-9-5) at (9,5) {};

\draw[blocked] (v-6-7) -- (v-6-6);
\draw[blocked] (v-7-7) -- (v-7-6);
\draw[blocked] (v-8-7) -- (v-8-6);
\draw[blocked] (v-9-7) -- (v-9-6);
\draw[blocked] (v-9-6) -- (v-9-5);
\draw[blocked] (v-8-6) -- (v-9-6);
\draw[blocked] (v-5-6) -- (v-5-7);
\draw[blocked] (v-4-6) -- (v-4-7);
\draw[blocked] (v-6-6) -- (v-6-5);
\draw[blocked] (v-7-6) -- (v-8-6);

\node[terminal] (s) at (2,8) {};
\draw (s) node[above left,scale=2,color = orange] {$s$};

\node[terminal] (t) at (6,2) {};
\draw (t) node[below right,scale=2,color = orange] {$t$};

\end{tikzpicture}}
	\caption{The run of \textsc{RB-DFS} for this component is terminated.}
	\label{subfig:flw4}
	\end{subfigure}
	\begin{subfigure}[b]{0.32\columnwidth}
	\scalebox{0.5}{\begin{tikzpicture}

\def\n{9} 
\def\m{9} 
\pgfmathtruncatemacro{\nmo}{\n-1}
\pgfmathtruncatemacro{\mmo}{\m-2}

\foreach \I in {4,...,7} {
    \foreach \J in {1,...,\m} {
        \node[vertex] (v-\I-\J) at (\I,\J) {};
    }
}
\foreach \J in {2,...,\m} {
    \node[vertex] (v-3-\J) at (3,\J) {};
    \node[vertex] (v-8-\J) at (8,\J) {};
}

\foreach \J in {4,...,\m} {
    \node[vertex] (v-1-\J) at (1,\J) {};
    \node[vertex] (v-2-\J) at (2,\J) {};
}

\foreach \J in {2,...,\mmo} {
    \node[vertex] (v-9-\J) at (9,\J) {};
}

\draw[line width = 5pt, color=white!40!orange] (v-6-4) -- (6,3) -- (6,2);

\foreach \I in {2,...,8} {
    \foreach \J in {4,...,\m} {
        \pgfmathtruncatemacro{\Imo}{\I-1} 
        \draw[thin] (v-\I-\J) to (v-\Imo-\J);
    }
}

\foreach \J in {4,...,7} {
        \draw[thin] (v-9-\J) to (v-8-\J);
}

\foreach \I in {4,...,9} {
        \pgfmathtruncatemacro{\Imo}{\I-1} 
        \draw[thin] (v-\I-3) to (v-\Imo-3);
        \draw[thin] (v-\I-2) to (v-\Imo-2);
}

\foreach \I in {5,...,7} {
        \pgfmathtruncatemacro{\Imo}{\I-1} 
        \draw[thin] (v-\I-1) to (v-\Imo-1);
}

\foreach \I in {4,...,7} {
    \foreach \J in {2,...,\m} {
        \pgfmathtruncatemacro{\Jmo}{\J-1} 
        \draw[thin] (v-\I-\J) to (v-\I-\Jmo);
    }
}

\foreach \J in {5,...,\m} {
    \pgfmathtruncatemacro{\Jmo}{\J-1} 
    \draw[thin] (v-1-\J) to (v-1-\Jmo);
    \draw[thin] (v-2-\J) to (v-2-\Jmo);
}

\foreach \J in {3,...,\m} {
    \pgfmathtruncatemacro{\Jmo}{\J-1} 
    \draw[thin] (v-3-\J) to (v-3-\Jmo);
    \draw[thin] (v-8-\J) to (v-8-\Jmo);
}

\foreach \J in {3,...,7} {
    \pgfmathtruncatemacro{\Jmo}{\J-1} 
    \draw[thin] (v-9-\J) to (v-9-\Jmo);
}

\node[scale=0.8,draw,circle,fill=blue,draw=black] (v-6-8) at (6,8) {};
\node[scale=0.8,draw,circle,fill=blue,draw=black] (v-7-8) at (7,8) {};
\node[scale=0.8,draw,circle,fill=blue,draw=black] (v-8-8) at (8,8) {};
\node[scale=0.8,draw,circle,fill=blue,draw=black] (v-3-7) at (3,7) {};
\node[scale=0.8,draw,circle,fill=blue,draw=black] (v-3-6) at (3,6) {};
\node[scale=0.8,draw,circle,fill=blue,draw=black] (v-5-5) at (5,5) {};
\node[scale=0.8,draw,circle,fill=blue,draw=black] (v-4-5) at (4,5) {};
\node[scale=0.8,draw,circle,fill=blue,draw=black] (v-7-5) at (7,5) {};
\node[scale=0.8,draw,circle,fill=blue,draw=black] (v-8-5) at (8,5) {};
\node[scale=0.8,draw,circle,fill=blue,draw=black] (v-9-4) at (9,4) {};
\node[scale=0.8,draw,circle,fill=blue,draw=black] (v-4-8) at (4,8) {};
\node[scale=0.8,draw,circle,fill=blue,draw=black] (v-5-8) at (5,8) {};
\node[scale=0.8,draw,circle,fill=blue,draw=lime] at (6,4) {};
\node[scale=1,draw,circle,fill=blue,draw=lime, line width = 3pt] (v-6-4) at (6,4) {};

\draw[->, color = black!30!green, line width = 2pt] (v-4-5) to (v-5-5);
\draw[->, color = black!30!green, line width = 2pt] (v-5-5) to (v-6-5);
\draw[->, color = black!30!green, line width = 2pt] (v-6-5) to (v-6-4);

\foreach \I in {4,...,\n} {
  \node[scale=0.8,draw,circle,fill=red,draw=black] (v-\I-7) at (\I,7) {};
}
\foreach \I in {4,...,8} {
  \node[scale=0.8,draw,circle,fill=red,draw=black] (v-\I-6) at (\I,6) {}; //six seven
}

\node[scale=0.8,draw,circle,fill=red,draw=black] (v-6-5) at (6,5) {};
\node[scale=0.8,draw,circle,fill=red,draw=black] (v-9-5) at (9,5) {};

\draw[blocked] (v-6-7) -- (v-6-6);
\draw[blocked] (v-7-7) -- (v-7-6);
\draw[blocked] (v-8-7) -- (v-8-6);
\draw[blocked] (v-9-7) -- (v-9-6);
\draw[blocked] (v-9-6) -- (v-9-5);
\draw[blocked] (v-8-6) -- (v-9-6);
\draw[blocked] (v-5-6) -- (v-5-7);
\draw[blocked] (v-4-6) -- (v-4-7);
\draw[blocked] (v-6-6) -- (v-6-5);
\draw[blocked] (v-7-6) -- (v-8-6);

\node[terminal] (s) at (2,8) {};
\draw (s) node[above left,scale=2,color = orange] {$s$};

\node[terminal] (t) at (6,2) {};
\draw (t) node[below right,scale=2,color = orange] {$t$};

\end{tikzpicture}}
	\caption{The traveller goes to the closest vertex to $t$.}
	\label{subfig:flw5}
	\end{subfigure}
	\begin{subfigure}[b]{0.32\columnwidth}
	\scalebox{0.5}{\begin{tikzpicture}

\def\n{9} 
\def\m{9} 
\pgfmathtruncatemacro{\nmo}{\n-1}
\pgfmathtruncatemacro{\mmo}{\m-2}

\foreach \I in {4,...,7} {
    \foreach \J in {1,...,\m} {
        \node[vertex] (v-\I-\J) at (\I,\J) {};
    }
}
\foreach \J in {2,...,\m} {
    \node[vertex] (v-3-\J) at (3,\J) {};
    \node[vertex] (v-8-\J) at (8,\J) {};
}

\foreach \J in {4,...,\m} {
    \node[vertex] (v-1-\J) at (1,\J) {};
    \node[vertex] (v-2-\J) at (2,\J) {};
}

\foreach \J in {2,...,\mmo} {
    \node[vertex] (v-9-\J) at (9,\J) {};
}

\foreach \I in {2,...,8} {
    \foreach \J in {4,...,\m} {
        \pgfmathtruncatemacro{\Imo}{\I-1} 
        \draw[thin] (v-\I-\J) to (v-\Imo-\J);
    }
}

\foreach \J in {4,...,7} {
        \draw[thin] (v-9-\J) to (v-8-\J);
}

\foreach \I in {4,...,9} {
        \pgfmathtruncatemacro{\Imo}{\I-1} 
        \draw[thin] (v-\I-3) to (v-\Imo-3);
        \draw[thin] (v-\I-2) to (v-\Imo-2);
}

\foreach \I in {5,...,7} {
        \pgfmathtruncatemacro{\Imo}{\I-1} 
        \draw[thin] (v-\I-1) to (v-\Imo-1);
}

\foreach \I in {4,...,7} {
    \foreach \J in {2,...,\m} {
        \pgfmathtruncatemacro{\Jmo}{\J-1} 
        \draw[thin] (v-\I-\J) to (v-\I-\Jmo);
    }
}

\foreach \J in {5,...,\m} {
    \pgfmathtruncatemacro{\Jmo}{\J-1} 
    \draw[thin] (v-1-\J) to (v-1-\Jmo);
    \draw[thin] (v-2-\J) to (v-2-\Jmo);
}

\foreach \J in {3,...,\m} {
    \pgfmathtruncatemacro{\Jmo}{\J-1} 
    \draw[thin] (v-3-\J) to (v-3-\Jmo);
    \draw[thin] (v-8-\J) to (v-8-\Jmo);
}

\foreach \J in {3,...,7} {
    \pgfmathtruncatemacro{\Jmo}{\J-1} 
    \draw[thin] (v-9-\J) to (v-9-\Jmo);
}

\draw[->, color = black!30!green, line width = 2pt] (v-6-4) to (v-6-3);
\draw[->, color = black!30!green, line width = 2pt] (v-6-3) to (v-6-2);


%

\draw[blocked] (v-6-7) -- (v-6-6);
\draw[blocked] (v-7-7) -- (v-7-6);
\draw[blocked] (v-8-7) -- (v-8-6);
\draw[blocked] (v-9-7) -- (v-9-6);
\draw[blocked] (v-9-6) -- (v-9-5);
\draw[blocked] (v-8-6) -- (v-9-6);
\draw[blocked] (v-5-6) -- (v-5-7);
\draw[blocked] (v-4-6) -- (v-4-7);
\draw[blocked] (v-6-6) -- (v-6-5);
\draw[blocked] (v-7-6) -- (v-8-6);

\node[terminal] (s) at (2,8) {};
\draw (s) node[above left,scale=2,color = orange] {$s$};

\node[terminal, scale = 1.2, draw = lime, line width = 3pt] (t) at (6,2) {};
\draw (t) node[below right,scale=2,color = orange] {$t$};

\end{tikzpicture}}
	\caption{The traveller reaches $t$ without encountering any other blockage.}
	\label{subfig:flw6}
	\end{subfigure}
	\caption{How \flw\ works on an example}
	\label{fig:flw}
\end{figure}

\subsection{Competitive ratio of \flw}

To determine the competitive ratio of \flw\ on solid grid graphs, under the hypothesis that $\vert E \vert \le \alpha N^2$, our reasoning consists in showing that the distance traversed by the traveller is proportional to $k + N$. If such a statement holds, then the competitive ratio of \flw\ depends asymptotically on $\frac{k}{\dopt} \le \frac{k}{N}$. As $k \le \vert E \vert \le \alpha N^2$, we have $N \ge \sqrt{\frac{k}{\alpha}}$, which gives the upper competitive bound $\frac{k}{N} \le \sqrt{\alpha k} = O(\sqrt{k})$. 

Therefore, our first challenge is to assess the distance traversed by the traveller as a function of $k$ and $N$. We begin with a natural property of red-blue components.

\begin{lemma}
\label{lem:red-edge-routing}
Let $uv\in E_*$ be a blockage, then either:
\begin{itemize}
\item there exists a path between $u$ and $v$ included in $\rb(u)$, or
\item there is no open path between $u$ and $v$ in $G \setminus E_*$.
\end{itemize}
\end{lemma}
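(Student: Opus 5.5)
The plan is to use the planar embedding and the solidity of $G$: every bounded face is a unit square. The key local fact is the following. If $ab\in E_*$ lies on a bounded face $f=(a,b,b',a')$, then $a,b$ are red, and $a',b'$ are adjacent to red vertices, so they are colored. Hence \emph{every vertex on the boundary of a bounded face incident to a blockage is colored.} If all three other edges $aa',a'b',b'b$ of the face are open, we immediately get the colored path $a\cdot a'\cdot b'\cdot b$ in $G\setminus E_*$. This handles the easy case for $uv$ itself. The work is in the case where some of these edges are blocked too, or where $uv$ lies on the outer face.

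To handle the general case, I would group blockages into clusters. Call two blocked edges \emph{linked} if they lie on a common bounded face, and let $K$ be the equivalence class (transitive closure) of $uv$. Let $\mathcal{F}$ be the set of bounded faces incident to at least one edge of $K$. Let $H$ be the subgraph of $G\setminus E_*$ formed by the open edges lying on boundaries of faces of $\mathcal{F}$. By the local fact, every vertex of $H$ is colored, so each component of $H$ lies inside a single red-blue component. It therefore suffices to show: if $u$ and $v$ are connected in $G\setminus E_*$, then they are connected in $H$. I would prove this by a boundary-walk argument. Let $D$ be the closed region of the plane consisting of the union of the faces in $\mathcal{F}$ together with the edges of $K$. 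Consider a walk in $G$ from $u$ that follows the boundary of $D$ (a ``wall-following'' walk, in the spirit of \flw), keeping $D$ on one fixed side.

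By construction, every edge on the frontier of $D$ is either open, and then belongs to $H$, or blocked, and then belongs to $K$. Suppose it belongs to $K$. Such an edge $e$ is incident to faces of $\mathcal{F}$, and any bounded face of $G$ on the far side of $e$ also lies in $\mathcal{F}$. So $e$ can lie on the frontier of $D$ only if the face on its far side is the outer face of $G$.

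Now suppose $u$ and $v$ lie in different components of $H$. I would then build a closed curve separating them, consisting of the blocked edge $uv$ together with a curve through the faces of $D$ and the outer face, crossing no open edge outside $H$. This is essentially a dual cycle through $uv$ in which every other crossed edge is blocked. By the Jordan curve theorem and planar cut--cycle duality, this curve would separate $u$ from $v$ in $G\setminus E_*$, giving the second alternative of the lemma.

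The main obstacle is making this separation argument rigorous, and in particular handling the outer face. The outer face is not a unit square, so the local coloring fact fails there. Moreover, $uv$ may lie on the outer face, so that only one side of $uv$ carries a unit square. I would try to handle this as follows. When the boundary walk around $D$ starting at $u$ reaches the outer face through a blocked edge, the separating curve is closed through the outer face. This is exactly the situation in which $u$ and $v$ become disconnected, for instance when a chain of blockages cuts the solid grid from boundary to boundary. In the remaining case, the boundary of the component of $D$ containing $uv$ consists of open edges near $u$ and $v$, so the frontier walk from $u$ reaches $v$ inside $H$, and hence inside $\rb(u)$. Solidity is used precisely here: without it, a hole would act like the outer face without being accounted for, and the argument would break.
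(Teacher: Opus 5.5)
Your reduction is sound, and your route differs genuinely from the paper's. The paper never groups blockages. It takes an open $(u,v)$-path $Q$ for which the cycle $Q+uv$ encloses minimum area, and proves by induction along $Q$ that every vertex lies in $\rbm(u)$. The only dangerous step is a blue vertex followed by an uncolored one. There the coloring rules force the unit square inside the cycle on that edge to be entirely open, which gives an area-decreasing detour. That argument is short, local and elementary.

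Your ingredients are all correct:
\begin{itemize}
\item every vertex of a unit square carrying a blockage is colored, so $H$ is a subgraph of the red-blue graph;
\item it suffices to show that $u,v$ connected in $G\setminus E_*$ implies $u,v$ connected in $H$;
\item a blocked edge on the frontier of $D$ lies in $K$ and must have the outer face of $G$ on its far side.
\end{itemize}
Your route gives a more structural conclusion: the connecting path uses only open edges of unit squares touching the blockage cluster of $uv$. It also shows clearly where solidity enters, since the outer face is the only non-square face through which a separating curve can close. The price is reliance on planar cut--cycle duality.

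The weak point is the separation step. It carries all the content, it is only sketched, and the wall-following mechanism does not work as described. First, $u$ need not lie on the frontier of $D$ at all: $uv$ can be the only blockage at $u$ while all four unit squares around $u$ belong to $\mathcal F$ through blockages of $K$ slightly farther away. Second, the frontier of $D$ can consist of several closed curves or touch itself at a vertex. So ``the frontier walk from $u$ reaches $v$'' has no precise meaning.

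Replace it by duality, as follows.
\begin{itemize}
\item Let $G_D$ be the plane graph formed by all edges, open or blocked, on the boundaries of faces of $\mathcal F$. It is connected, and every face of $\mathcal F$ is a face of $G_D$. Also $H=G_D\setminus E_*$, since every blocked edge on a face of $\mathcal F$ belongs to $K$.
\item Suppose $u$ and $v$ lie in different components of $H$. The cut of $G_D$ around the component of $u$ consists of blocked edges, so it contains a bond through $uv$. Its dual is a cycle $C^*$ of $G_D^*$ through $(uv)^*$.
\item Let $F$ be a face of $G_D$ on $C^*$ that is not in $\mathcal F$. Take any bond edge $e$ whose dual meets $F$ on $C^*$, and let $g$ be the face of $G$ on the $F$-side of $e$. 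Then $g$ lies inside $F$, so $g\notin\mathcal F$. Since $e\in K$, your key fact forces $g$ to be the outer face $o$ of $G$. Hence $F$ is the unique face of $G_D$ containing $o$.
\item Replacing $F$ by $o$ turns $C^*$ into a closed walk of $G^*$ through duals of blocked edges, using $(uv)^*$ exactly once. It therefore contains a dual cycle through $(uv)^*$, i.e.\ a bond of $G$ inside $E_*$ separating $u$ from $v$.
\item If $uv$ lies on no bounded face, it is a bridge of $G$ and the second alternative is immediate.
\end{itemize}
With this replacement, your argument becomes a complete proof.
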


\begin{proof}
Assume that $uv\in E_*$ and that there exists a path
$Q=(u=u_1,\ldots,u_p=v)$ from $u$ to $v$ in $G\setminus E_*$.
In particular, suppose that $Q$ is chosen to be \emph{area-minimal},
that is, the area of the interior of the cycle formed by adding the edge
$uv$ to $Q$ is minimum. We denote the interior of this cycle by
$\mathcal{C}$ and orient $Q$ from $u$ to $v$ for simplicity.

We prove inductively that $V(Q)\subseteq \rbm(u)$. First,
$u_1=u\in \rbm(u)$. Moreover, $u_2\in \rbm(u)$, since $u$ is red
and $u_2$ is adjacent to $u$, and hence $u_2$ is either red or blue.

Now, assume as induction hypothesis that $u_i\in \rbm(u)$ for some $i<p$. If $u_i$ is red,
then $u_{i+1}$ is blue, since $u_i u_{i+1}\notin E_*$ by definition. By connectivity,
$u_{i+1}\in \rbm(u)$.

Otherwise, suppose that $u_i$ is blue and that $u_{i+1}$ is neither
red nor blue. Consider the vertices $x$ and $y$ such that
$xu_i u_{i+1}y$ is a $4$-cycle and such that the corresponding unit
square is fully contained in $\mathcal{C}$. The vertices $x$ and $y$ exist since $Q$ is a simple cycle in $G$, and $G$ is a solid grid graph and therefore contains no hole. We claim that no edge of
this $4$-cycle is blocked. Indeed, $xu_i\notin E_*$, since otherwise
$u_i$ would be red. Similarly, $u_{i+1}y\notin E_*$, since otherwise
$u_{i+1}$ would be red. Finally, $xy\notin E_*$, since otherwise $u_{i+1}$ would be
blue. All in all, our claim is satisfied.

Therefore, replacing the edge $u_i u_{i+1}$ by the path
$u_i x y u_{i+1}$ yields a path $Q'$ from $u$ to $v$ in
$G\setminus E_*$ whose enclosed area is smaller than the one of
$Q$. This contradicts the area-minimality of $Q$.
Thus, in all cases, $u_{i+1}\in \rbm(u)$. By induction,
$u_i\in \rbm(u)$ for all $i\in\{1,\ldots,p\}$. In particular,
$v = u_p\in \rbm(u)$, and hence there exists a path between $u$ and
$v$ in $\rbm(u)$.
\end{proof}

\flw\ strategy ensures that, after running \textsc{RB-DFS}, the traveller necessarily visits another vertex of $P$ which is closer to $t$. Such a guarantee is formally proven below.

\begin{theorem}[Red-blue routing]
\label{thm:rb-routing}
Let $P=(v_0=s,v_1,\ldots,v_q=t)$ be a shortest $(s,t)$-path in $G$ and assume the traveller follows \flw : they are currently located on $v_j$ and discovers blockage $v_jv_{j+1} \in E_*$.
\textsc{RB-DFS} is then launched to perform an exhaustive
exploration of $\rb(v_j)$ and exactly one of the two following statements holds:
\begin{enumerate}
\item there exists $\ell$, with $j<\ell\le q$, such that $v_\ell \in \rb(v_j)$, and either $v_\ell=t$ or $v_\ell v_{\ell+1}\notin E_*$;
\item there is no $(s,t)$-path in $G \setminus E_*$.
\end{enumerate}
\end{theorem}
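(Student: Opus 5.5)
Assume statement 2 fails, so there is an $(s,t)$-path in $G\setminus E_*$. We must produce an index $\ell>j$ with $v_\ell\in\rb(v_j)$ and either $v_\ell=t$ or $v_\ell v_{\ell+1}\notin E_*$. The two statements are mutually exclusive once we check that statement 1 forces an open continuation. Indeed, when \textsc{RB-DFS} ends, it has visited every vertex of $\rb(v_j)$, because the DFS pushes all open neighbours of colored vertices and colors are decided exactly. So the existence claim is a purely combinatorial property of $\rb(v_j)$ and the walk is a side issue.

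\textbf{Step 1: an open path leaving the blockage region beyond $v_j$.} Let $m>j$ be the largest index such that $v_{m-1}v_m\in E_*$, that is, the last blockage on $P$. Then the suffix $P[v_m,t]$ is open. The red vertex $v_m$ is a natural candidate for $v_\ell$, since either $v_m=t$ or $v_mv_{m+1}\notin E_*$. It remains to show $v_m\in\rb(v_j)$. More flexibly, I would prove that every blocked edge $v_iv_{i+1}$ of $P$ with $i\ge j$ lies in the same red-blue component as $v_j$, and then take the last one.

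\textbf{Step 2: connecting consecutive blockages of $P$.} This step has two parts.
\begin{itemize}
\item Within a single blockage $v_iv_{i+1}\in E_*$: Lemma~\ref{lem:red-edge-routing} applies, since $s,t$ are connected in $G\setminus E_*$ and $P$ passes through $v_i$ and $v_{i+1}$. I would argue that $v_i$ and $v_{i+1}$ are joined by an open path, for instance by combining the open $(s,t)$-path with planarity: a closed curve through $v_iv_{i+1}$ separates the plane. The lemma then puts $v_{i+1}\in\rb(v_i)$.
\item Between the open stretch $v_{i+1},\dots,v_{i'}$ of $P$ up to the next blockage: this stretch need not be colored, so it does not link components directly. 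Here I would rerun the area-minimality argument of Lemma~\ref{lem:red-edge-routing} in a generalized form. Take an open path $Q$ from $v_{i+1}$ onward, chosen to minimise the area enclosed together with a segment of $P$. Whenever $Q$ steps from a blue to an uncolored vertex, the unit square on the interior side has no blocked edge, so $Q$ can be pushed inward. The result is a colored path connecting the components.
\end{itemize}

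\textbf{Main obstacle.} The main obstacle is the topological step: showing that an open detour around the blockage $v_jv_{j+1}$ can be chosen so that it rejoins $P$ at a later vertex, and that the area-minimal such detour stays colored. I would first try to apply Lemma~\ref{lem:red-edge-routing} not to $v_jv_{j+1}$ alone but to a modified graph. Contract the open suffix behind the obstacle, or equivalently consider the cycle formed by the optimal open $(s,t)$-path together with $P$. Then argue by area-minimality, exactly as in the lemma, that the first vertex where the minimal detour re-meets $P$ after $v_j$ is reached through colored vertices. Solidity ensures the unit squares used for pushing exist. From that re-meeting vertex, take the furthest index in $\rb(v_j)\cap P$ whose next $P$-edge is open or which equals $t$. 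Since the optimal path reaches $t$, such an index exists and is greater than $j$, which gives statement 1.
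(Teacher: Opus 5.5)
There is a genuine gap: the claim your Steps~1--2 rest on is false. You want every blocked edge $v_iv_{i+1}$ of $P$ with $i\ge j$ to lie in $\rbm(v_j)$, so that the last blockage of $P$ supplies $v_\ell$. For a counterexample, take a large rectangular grid with $P$ a straight path far from the boundary. Block only $v_jv_{j+1}$ and $v_iv_{i+1}$, with $i\ge j+5$. Each red-blue component then consists of the two endpoints of its blocked edge and their neighbours, the two components are disjoint, and $s,t$ stay connected. So the last blockage need not be in $\rbm(v_j)$; this is exactly why \flw\ runs \textsc{RB-DFS} once per component. Hence the colored path ``connecting the components'' in your second bullet does not exist in general. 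Your first bullet fails as well. If all four edges at an interior vertex $v_{i+1}$ of $P$ are blocked, $s$ and $t$ can still be joined around it, yet $v_{i+1}$ is isolated in $G\setminus E_*$, so no open $(v_i,v_{i+1})$-path exists. The Jordan-curve remark gives nothing here. Your claim that the two statements are mutually exclusive is also unfounded. Statement~1 only says $v_\ell v_{\ell+1}$ is open, not that $v_\ell$ reaches $t$, so both statements can hold on an infeasible instance. The paper's proof only shows that at least one holds.

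Your fallback in the last paragraph assumes the conclusion at the decisive point. Once a detour re-meets $P$ at some $v_r\in\rbm(v_j)$ whose next edge may again be blocked, the sentence ``such an index exists since the optimal path reaches $t$'' is the theorem itself. The missing idea is a separation argument anchored at the \emph{largest} index $\ell$ with $v_\ell\in\rbm(v_j)$.
\begin{itemize}
\item If $\ell=q$ or $v_\ell v_{\ell+1}\notin E_*$, you are done, and $\ell>j$ automatically because $v_jv_{j+1}\in E_*$.
\item Otherwise, prove by induction on $\ell'>\ell$ that no open path joins $v_\ell$ to $v_{\ell'}$. The base case is Lemma~\ref{lem:red-edge-routing} together with $\rbm(v_\ell)=\rbm(v_j)$.
\item In the inductive step, the induction hypothesis forces an area-minimal open $(v_\ell,v_{\ell'})$-path $Q$ to meet $P[v_\ell,v_{\ell'}]$ only at its endpoints. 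By solidity, each edge of $Q$ then bounds a unit square inside the enclosed region. That square must contain a blocked edge, since otherwise rerouting through it shrinks the area. Hence every vertex of $Q$ is colored, so $v_{\ell'}\in\rbm(v_j)$, contradicting the maximality of $\ell$.
\item Taking $\ell'=q$, and noting that $v_\ell$ is joined to $s$ through $\rbm(v_j)$ and the traveller's walk, yields statement~2.
\end{itemize}
Alternatively, your detour idea could be completed by iterating it explicitly from each blocked re-meeting vertex, which lies in the component of $s$ and hence of $t$. However, that iteration, together with a proof that each detour is colored, is exactly what your proposal omits.
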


\begin{proof}
We start with the proof of the following claim.

\begin{claim}
\label{claim:rb-separator}
Let $\ell \le q$ be the maximum index such that $v_\ell\in \rbm(v_j)$. In other words,
$v_{\ell'}\notin \rbm(v_j)$ for every
$\ell < \ell' \le q$. If
$v_\ell v_{\ell+1}\in E_*$, then there is no $(s,t)$-path in $G \setminus E_*$.
\end{claim}

\begin{claimproof}
Naturally, such an index $\ell$ exists as $j$ is a candidate. We prove by induction on $\ell' > \ell$ that there is no path
between $v_\ell$ and $v_{\ell'}$ in $G \setminus E_*$.

For $\ell'=\ell+1$, we have
$v_\ell v_{\ell+1}\in E_*$. By Lemma~\ref{lem:red-edge-routing},
either there exists a path between $v_\ell$ and $v_{\ell+1}$
contained in $\rbm(v_\ell)$, or there is no open path between
$v_\ell$ and $v_{\ell+1}$ in $G \setminus E_*$. Since $v_\ell\in \rbm(v_j)$, Observation~\ref{obs:rb-disjoint} gives $ \rbm(v_\ell)=\rbm(v_j)$.
The first alternative would imply
$v_{\ell+1}\in \rbm(v_j)$, contradicting our claim hypothesis. So, there is
no $(v_\ell,v_{\ell+1})$-path in $G \setminus E_*$.

Now suppose that the statement holds for every integer between
$\ell$ and $\ell'-1$, where $\ell'>\ell+1$. In particular, for every
$\ell + 1\le r\le \ell' - 1$, there is no open path between
$v_\ell$ and $v_{r}$. We prove that there is no open path between
$v_\ell$ and $v_{\ell'}$.

Suppose, for a contradiction, that a $(v_\ell,v_{\ell'})$-path exists in $G\setminus E_*$. Among
all open paths $Q$ from $v_\ell$ to $v_{\ell'}$, choose one for which
the area enclosed by $Q$ and the subpath
$P[v_\ell,\ldots,v_{\ell'}]$ is minimum, and denote this region by
$\mathcal C$. By the minimality of $\mathcal C$, the paths $Q$ and
$P[v_\ell,\ldots,v_{\ell'}]$ have no common vertex other than
$v_\ell$ and $v_{\ell'}$: otherwise, a proper subpath of $Q$ would
give an open path from $v_\ell$ to some $v_r$, with
$\ell<r<\ell'$, contradicting the induction hypothesis.

We write $
Q=(v_\ell = u_1,\ldots,u_p = v_{\ell'})$.
We prove by induction on $r$ that $
u_r\in \rbm(v_\ell)$ for every $1\le r\le p$. The statement is trivial for $r=1$.
Suppose that $u_r\in \rbm(v_\ell)$ for some $r<p$. Since $Q$ and
$P[v_\ell,\ldots,v_{\ell'}]$ intersect only at their endpoints, the
edge $u_ru_{r+1}$ is incident to a unit square contained in $\mathcal C$. Let $x$ and $y$ be the other two vertices of this
square, so that $xu_ru_{r+1}y$ is its boundary.
At least one edge of this $4$-cycle is blocked. Indeed, if all four edges were open, then replacing $u_ru_{r+1}$ in $Q$ by
the three-edge path $u_rxyu_{r+1}$ would produce an open $(v_\ell,v_{\ell'})$ path enclosing a strictly smaller area than
$\mathcal C$, contradicting the choice of $Q$.

Hence some edge of the $4$-cycle is blocked. Both endpoints of a blocked edge are red, and consequently all four vertices of the
square are red or blue. In particular, $u_{r+1}\in R\cup B$.
Since $u_r\in \rbm(v_\ell)$ and the square contains a red vertex, then $u_{r+1}$ is connected to $u_r$ in the red-blue graph. Thus $u_{r+1}\in \rbm(v_\ell)$.

This completes the induction. In particular, for $r = p$, we have $
v_{\ell'}=u_p\in \rbm(v_\ell)=\rbm(v_j)$.
This contradicts the hypothesis that
$v_{\ell'}\notin \rbm(v_j)$ for every $\ell'>\ell$. Therefore there is
no open path between $v_\ell$ and $v_{\ell'}$.

Taking $\ell'=q$ gives that there is no open path from $v_\ell$ to
$t$. On the other hand, since $v_\ell\in \rbm(v_j)$, there is an open
path from $v_j$ to $v_\ell$ contained in $\rbm(v_j)$. Moreover, there is an $(s,v_j)$-path in $G \setminus E_*$ since the traveller moved from $s$ to $v_j$. Thus, in graph $G \setminus E_*$, vertices $s$ and $v_{\ell}$ belong to the same connected component, but $t$ belongs to another one.
\end{claimproof}

Based on this claim, we fix $\ell$ as the largest index such that $v_\ell\in \rbm(v_j)$.
If $\ell=q$, then
$v_\ell=t$, and the statement of the theorem holds.
Same remark if $\ell<q$ and $v_\ell v_{\ell+1}\notin E_*$. Finally, consider the case $v_\ell v_{\ell+1}\in E_*$: Claim~\ref{claim:rb-separator} implies
that there is no open $(s,t)$-path.
\end{proof}

This latter result is the key property for establishing the competitive ratio of \flw\ on $(G,E_*)$. Concretely, during their walk, the traveller either traverses edges of the path $P$ towards $t$ or explores edges of a red-blue component.

\begin{theorem}\label{thm:starting-strategy}
On a solid grid graph $G$ with blocked edges $E_*$, the \flw\ strategy either makes the traveller reach $t$ if $(G,E_*)$ is feasible or certifies that there is no $(s,t)$-path in $G \setminus E_*$. In both cases, the distance traversed is at most $N + O(k)$.
\end{theorem}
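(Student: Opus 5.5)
The proof splits the traveller's walk into two kinds of moves: moves along $P$, and moves made inside calls to \textsc{RB-DFS}. Correctness comes from Theorem~\ref{thm:rb-routing}. Each time the traveller stands at $v_j$ and discovers $v_jv_{j+1}\in E_*$, \textsc{RB-DFS} does one of two things. Either it returns an index $\ell>j$ with $v_\ell\in\rb(v_j)$, and $v_\ell=t$ or $v_\ell v_{\ell+1}\notin E_*$; or it reaches line~\ref{line-error}, which by the same theorem certifies that no open $(s,t)$-path exists. The index $i$ in Algorithm~\ref{algo:flw} therefore increases strictly, so the traveller either reaches $t$ or outputs a correct certificate.

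\textbf{Distance along $P$.} The inner while loop only traverses edges $v_iv_{i+1}$ with increasing $i$, and each such edge is traversed at most once. These moves cost at most $q=N$ in total.

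\textbf{Distance inside \textsc{RB-DFS}.} I will first show that every call explores a red-blue component not explored before. When the traveller resumes on $P$ at $v_\ell$ and is blocked again at some $v_{j'}$ with $j'>\ell$, I need $\rb(v_{j'})\neq\rb(v_j)$. By the choice of $\ell$ in Algorithm~\ref{algo:dfs}, $v_\ell$ is the vertex of $T\cap P$ closest to $t$. Since $T$ spans all of $\rb(v_j)$, any $v_{j'}$ with $j'>\ell$ and $v_{j'}v_{j'+1}\in E_*$ lies outside $\rb(v_j)$. To make this airtight I will take $\ell$ as in the proof of Theorem~\ref{thm:rb-routing}, namely the largest index with $v_\ell\in\rb(v_j)$. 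By Observation~\ref{obs:rb-disjoint}, distinct calls then explore vertex-disjoint components.

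It then suffices to bound the cost of one call on a component $C$ by $O(|C|)$, or by $O$ of the number of blockages incident to $C$. The DFS walk along the tree $T$ costs at most $2(|C|-1)$. The color checks at line~\ref{line-decide} cost $O(1)$ per vertex, since they inspect at most 4 neighbours and their neighbours at constant depth. The final move to $v_\ell$ along $T$ costs at most $|C|$. Hence each call costs $O(|C|)$.

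\textbf{Counting colored vertices.} Summing over all calls gives at most $O(|R\cup B|)$. Every red vertex is an endpoint of a blocked edge, so $|R|\le 2k$. Every blue vertex is adjacent to a red vertex, and in a grid each vertex has degree at most 4, so $|B|\le 4|R|\le 8k$. The total \textsc{RB-DFS} cost is therefore $O(k)$, and the total distance is at most $N+O(k)$.

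\textbf{Main obstacle.} The hardest step is the per-call accounting while colors are only partially known. The DFS pushes neighbours of colored vertices, and some of those neighbours turn out to be uncolored. Visiting such a vertex, deciding its color, and backtracking also costs $O(1)$. Each pushed uncolored vertex is adjacent to a colored vertex of $C$, so there are at most $4|C|$ of them per call. I also have to check that blockages discovered mid-exploration do not let the DFS tree leave $C$. This holds because expansion only continues from vertices already certified as colored, and the open edges used are edges of $G\setminus E_*$.
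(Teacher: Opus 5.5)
Your proposal is correct and follows the same route as the paper. Correctness comes from Theorem~\ref{thm:rb-routing}, and the distance bound combines at most $N$ for the moves along $P$, an $O(|\rbm(v_j)|)$ cost per \textsc{RB-DFS} call, vertex-disjointness of the explored components, and $|R\cup B|=O(k)$. Two of your steps are slightly more careful than the paper, which asserts the disjointness and per-call bounds without argument. You choose $\ell$ as the largest index with $v_\ell\in\rbm(v_j)$; this is a mild reinterpretation, since the pseudocode instead minimises $d_{E_*'}(\cdot,t)$ over $T\cap P$. You also explicitly count the popped uncolored vertices.
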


\begin{proof}
Let us first explain why \flw\ actually guides the traveller towards $t$ if $(G,E_*)$ is feasible. The traveller walks on the pre-computed $(s,t)$-path $P = (v_0=s,\ldots,v_q=t)$, until a blockage $v_jv_{j+1}$ is discovered. \textsc{RB-DFS} will explore $\rbm(v_j)$ and, from  Theorem~\ref{thm:rb-routing}, will identify a vertex $v_{\ell} \in P$, $\ell > j$, such that $v_{\ell} \in \rbm(v_j)$. Then, the traveller goes to $v_{\ell}$ (line~\ref{line:return-dfs} of Algorithm~\ref{algo:dfs}). Therefore, \textsc{RB-DFS} moves the traveller to another vertex of $P$ which is closer to $t$. Since $P$ is finite, the traveller will necessarily reach target $t$ after applying the above procedure a finite number of times.
%
%
%
%
%
%

Conversely, if no open $(s,t)$-path exists, then, since the strategy continues along $P$ until reaching $t$ or until a blocked edge is
encountered, some call to Theorem~\ref{thm:rb-routing} must return the second outcome, thereby correctly certifying that there is no $(s,t)$-path in $G \setminus E_*$.

It remains to bound the total length of the walk. The walk alternates
between following the path $P$ and achieving open walks in different red-blue components
$\rbm(v_{j_i})$, with $1\le i\le p$ and $p$ some positive integer. Each such walk is obtained after an
exhaustive exploration of $N_G[\rbm(v_{j_i})]$.
The exploration phase of \textsc{RB-DFS} consists in traversing a tree $T$, and each edge of $T$ is traversed at most twice. Thus, the distance traversed by the traveller for the exploration of $\rbm(v_{j_i})$ is at most $2\vert N_G[\rbm(v_{j_i})]\vert = O(\vert \rbm(v_{j_i}) \vert)$, since the degree of each vertex is at most $4$. To which we can add an extra term $\vert N_G[\rbm(v_{j_i})]\vert = O(\vert \rbm(v_{j_i}) \vert)$ for moving to $v_{\ell_i}$ when the exploration is completed (line~\ref{line:return-dfs} of Algorithm~\ref{algo:dfs}). In the worst case, deciding the color of each vertex explored needs the traveller to traverse each incident edge twice, adding a constant factor of $8$. Altogether, the distance traversed by the traveller due to $\rbm(v_{j_i})$ is $O(\vert\rbm(v_{j_i})\vert)$.

All red-blue components $\rbm(v_{j_i})$ traversed by the traveller are disjoint (Observation~\ref{obs:rb-disjoint}), since at each new launch of \textsc{RB-DFS}, we explore completely the vertex set $\rbm(v_{j_i})$. Thus, the total length of the walk due to all \textsc{RB-DFS} calls is at most $\sum_{i=1}^p O(\vert \rbm(v_{j_i})\vert) = O(\vert R \cup B\vert)$.

Moreover, for each blocked edge, at most $8$ vertices are colored (two red endpoints with $6$ possible neighbors). Thus, $\vert R \cup B\vert \le 8k$.
All in all, the total length traversed by the traveller is at most $\vert V(P) \vert + O(\vert R \cup B\vert) = N + O(k)$.
%
%
\end{proof}

Combining this distance traversed with our initial assumption that $\vert E \vert \le \alpha \cdot N^2$, we conclude that the competitive ratio of the \flw\ strategy is $O(\sqrt{k})$.

\begin{theorem}
On a unit-weighted solid grid graph $G$ with blocked edges $E_*$, where $(G,E_*)$ is feasible and $\vert E \vert \le \alpha \cdot N^2$, the \flw\ strategy has
competitive ratio $O(\sqrt{k})$.
\label{thm-small-grids}
\end{theorem}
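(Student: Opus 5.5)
The plan is to combine Theorem~\ref{thm:starting-strategy} with two elementary lower bounds on the optimal offline cost. Since $(G,E_*)$ is feasible, Theorem~\ref{thm:starting-strategy} guarantees that the traveller guided by \flw\ reaches $t$. It also gives a constant $c>0$, independent of $G$, $k$ and $E_*$, such that
\[
\dtr{\flw}(G,E_*) \le N + c\,k .
\]
The constant comes from the disjointness of red-blue components and the bound $\vert R\cup B\vert\le 8k$.

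Next I bound the denominator. Since $E_*$ only removes edges, $\dopt = d_{E_*}(s,t) \ge d_{\emptyset}(s,t) = N$. This gives
\[
c_{\flw}(G,E_*) \le \frac{N + c\,k}{N} = 1 + c\,\frac{k}{N}.
\]

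It remains to bound $k/N$ by $O(\sqrt{k})$, and here the size hypothesis is used. Two cases need care.

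\emph{Effective number of blockages.} If $k$ exceeds $\vert E_*\vert$, I replace $k$ by $k' = \vert E_*\vert \le k$ throughout. The bound of Theorem~\ref{thm:starting-strategy} is really in terms of the number of actual blockages, and it suffices to prove $O(\sqrt{k'}) \le O(\sqrt{k})$.

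\emph{Degenerate case $N=0$.} Then $s=t$ and the ratio is trivially fine by convention.

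Otherwise, since $E_*\subsetneq E$, we have $k' \le \vert E\vert \le \alpha N^2$, so $N \ge \sqrt{k'/\alpha}$. Hence
\[
\frac{k'}{N} \le \frac{k'}{\sqrt{k'/\alpha}} = \sqrt{\alpha k'} .
\]
The competitive ratio is therefore at most $1 + c\sqrt{\alpha}\sqrt{k} = O(\sqrt{k})$, since $\alpha$ is a constant.

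The main point needing care is this last step: the inequality $k\le\alpha N^2$ only holds for the actual blockage count, not the parameter $k$. If one insists on the parameter $k$, note that $\sqrt{\alpha k'}\le\sqrt{\alpha k}$ anyway, so the argument goes through.
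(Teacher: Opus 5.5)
Your proof is correct and is essentially the paper's argument: bound the walk length by $N+O(k)$ via Theorem~\ref{thm:starting-strategy}, use $\dopt\ge N$, and turn $k\le\vert E\vert\le\alpha N^2$ into $k/N\le\sqrt{\alpha k}$. Your remark that the size hypothesis really constrains the actual blockage count $\vert E_*\vert$ rather than the parameter $k$ is a small but sensible tightening of a step the paper takes tacitly.
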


\begin{proof}
By Theorem~\ref{thm:starting-strategy}, if $(G,E_*)$ is feasible, then the \flw\ strategy guides the traveller towards $t$ and produces a walk of length at most $N + O(k)$.
By definition, $N = d_{\emptyset}(s,t)$, so $\dopt \ge N$. The competitive ratio of \flw\ on $(G,E_*)$ satisfies:
$$c(G,E_*) = \frac{N + O(k)}{N} = 1 + \frac{O(k)}{N}.$$
Furthermore, $k \le \vert E \vert \le \alpha \cdot N^2$. Reversing the inequality, we have $N \ge \sqrt{\frac{k}{\alpha}}$. As a conclusion,
$$c(G,E_*) = 1 + \frac{O(k)}{N} = 1 + \sqrt{\alpha}~\frac{O(k)}{\sqrt{k}} =O(\sqrt{k}).$$
%
%
%
%
%
%
%
%
%
In brief, the competitive ratio of \flw\ is $O(\sqrt{k})$, seeing $\alpha$ as a constant.
\end{proof}

\section{Extension to all grid sizes} \label{sec-double}

We generalize the result of Theorem~\ref{thm-small-grids}. We still consider unit-weighted solid grids but this time $\vert E \vert$ can be greater than $\alpha \cdot N^2$. We show that successively applying the \flw\ strategy to \emph{balls} of doubling sizes yields a competitive ratio of $O(\sqrt{k})$. We call this strategy \dflw , and its pseudocode is given in Algorithm~\ref{algo:Doublingflw}. Notation $N$ still refers to the initial $(s,t)$-distance $d_{\emptyset}(s,t)$. Remember that $d_{\infty}(u,v)$ is the infinity-norm distance between two vertices $u$ and $v$.

\begin{algorithm}
\caption{\dflw}
\label{algo:Doublingflw}
\KwIn{Solid grid $G$, source $s$, target $t$}
$N \gets d_{\emptyset}(s,t)$; $i_0 \gets \lfloor \log_2(N) \rfloor$; $i\gets 1$\;
$V_1 \gets \{v\in V(G) \mid d_{\infty}(s,v)\leq 2^{i_0+1}\}$ \tcp*{Square ball centered on $s$}
\While{\flw ~fails on $(G[V_i],s,t)$ \label{line-while}}{
the traveller goes back to $s$ using the shortest path among already visited edges \label{line-return-s}\;
$i\gets i+1$\;
$V_i \gets \{v\in V(G) \mid d_{\infty}(s,v)\leq 2^{i_0+i}\}$ \tcp*{Ball size is doubled}
}
\end{algorithm}

Let $i_0 = \lfloor \log_2 N \rfloor$. At each iteration $i \ge 1$, strategy \dflw\ only considers as an input the induced subgraph of $G$ containing vertices at $\ell_\infty$-distance $2^{i_0+i}$ from $s$, {\em i.e.} a value between $2^{i-1}N$ and $2^i N$. In other words, the graph is reduced to the ball centered on $s$ with radius $2^{i_0+i}$. We launch \flw\ on this input: either the traveller reaches $t$ or there is a call of \textsc{RB-DFS} inside \flw\ which returns an error (line~\ref{line-error} of Algorithm~\ref{algo:dfs}), meaning that there is no open $(s,t)$-path inside the ball. In the latter case, we expand the graph to explore, doubling the radius of the ball.

For every $i\geq 1$, let $V_i$ be defined as in Algorithm~\ref{algo:Doublingflw}, namely, as the set of vertices of $G$ at $\ell_{\infty}$-distance at most $2^{i_0+i}$ from $s$. Moreover, we refer to $k_i$ as the number of blocked edges whose endpoints both belong to $V_i$:  $k_i = \vert E_*\cap (V_i \times V_i)\vert$. We also fix $\lambda$ as the total number of iterations achieved by \dflw\ on the input graph: the last grid considered is the $\ell_\infty$-ball of radius $2^{i_0+\lambda}$. We establish a first key property: at each iteration, the number of blockages discovered is proportional to the size of the current ball.

\begin{lemma}[$\star$]\label{lemma:incr-k}
Assume \dflw\ is launched on a solid grid graph $G$, and let $2\leq i <\lambda$. Then
$k_i \geq k_{i-1}+2^{i_0+i-1}$.
\end{lemma}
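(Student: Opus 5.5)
Since $i<\lambda$, iteration $i$ fails: \flw\ on $(G[V_i],s,t)$ certifies (via Theorem~\ref{thm:starting-strategy}) that there is no open $(s,t)$-path in $G[V_i]\setminus E_*$. On the other hand, the road map $(G,s,t,E_*)$ is feasible, so there is an open $(s,t)$-path $Q$ in $G\setminus E_*$. The plan is to show that the failure in $V_i$ forces a "wall" of blockages separating $s$ from $t$ inside $V_i$, and that the part of this wall outside $V_{i-1}$ must have length proportional to the ring width $2^{i_0+i}-2^{i_0+i-1}=2^{i_0+i-1}$.

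First, I will fix the separating structure. Let $S$ be the connected component of $s$ in $G[V_i]\setminus E_*$; it does not contain $t$. Note $t\in V_1\subseteq V_{i-1}$, since $d_\infty(s,t)\le N\le 2^{i_0+1}$. Consider the edge boundary of $S$ inside $G[V_i]$: every edge of $G[V_i]$ with exactly one endpoint in $S$ is blocked. Using planarity and solidity of $G$ (no holes, so $G[V_i]$ is a union of unit squares in a region bounded by the square $\partial V_i$ and the boundary of $G$), the dual of this edge cut contains a dual curve $\gamma$ separating $s$ from $t$ within the region of $V_i$. Since $s$ and $t$ are disconnected in $G[V_i]$ but connected in $G$ by $Q$, the curve $\gamma$ must have both endpoints on the outer boundary of the ball $V_i$ (the $\ell_\infty$-square of radius $2^{i_0+i}$) or on the boundary of $G$ — and at least one endpoint must lie on the ball boundary where $Q$ escapes, else $\gamma$ would separate $s$ and $t$ in $G$ itself.

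Second, the counting step. The dual curve $\gamma$ crosses only blocked edges. I will argue that $\gamma$ passes from the ball boundary $\partial V_i$ into the vicinity of $t$ (or of $s$), in particular it must intersect $V_{i-1}$ or separate $s,t$ which both lie in $V_{i-1}$; either way $\gamma$ contains a sub-curve joining $\partial V_i$ to $V_{i-1}$, crossing only edges with both endpoints in $V_i\setminus V_{i-1}$ (or on the rim). Each unit step of a dual curve changes $\ell_\infty$-distance to $s$ by at most one, so this sub-curve crosses at least about $2^{i_0+i}-2^{i_0+i-1}=2^{i_0+i-1}$ distinct blocked edges, none counted in $k_{i-1}$ since they have an endpoint outside $V_{i-1}$. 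These together with the $k_{i-1}$ blockages in $V_{i-1}$ give $k_i\ge k_{i-1}+2^{i_0+i-1}$ (up to an off-by-one at the rim, which I would handle by choosing the sub-curve to start strictly inside $V_{i-1}$'s complement).

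The main obstacle is the topological step: rigorously extracting the dual blocked curve and showing it reaches the outer rim of $V_i$ rather than closing up or ending on the non-rectangular boundary of the solid grid. I would first try to reason with the open path $Q$: it leaves $V_i$, so consider its first exit; the component $S$'s boundary cut, together with the grid boundary, must separate $s$ from the exit point of $Q$ from $t$'s side, and solidity ensures the boundary of $G\cap V_i$ is a single simple curve, so the dual cut arcs have endpoints on that curve. Distinguishing arcs ending on $\partial G$ versus $\partial V_i$ is where care is needed.
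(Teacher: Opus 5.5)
Your route is genuinely different from the paper's, and it can be completed.

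\textbf{How the two routes compare.} The paper argues in the primal and by contradiction. If fewer than $R=2^{i_0+i-1}$ blockages lie in $E[V_i]\setminus E[V_{i-1}]$, then one of the $R$ edge-disjoint square rings $\Gamma_r$, $R<r\le 2R$, is entirely open. It pairs an open $(s,t)$-path (which must leave $V_i$) with a shortest path lying in $V_1$, and takes the resulting cycle of minimum enclosed area. Solidity shows that every lattice point inside this cycle is a vertex of $G$, so an arc of the free ring gives an open shortcut that decreases the area. You work in the dual instead: you exhibit the blocked separator inside $G[V_i]$ and show that its dual arc must cross every ring $\Gamma_r$, $R<r\le 2R$. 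That is exactly the fact the paper establishes by contradiction. Your version produces the $R$ blocked edges explicitly and avoids the area-minimisation bookkeeping, in particular the re-selection of the cycle after shortcutting. The paper's version stays elementary and needs no duality or winding-number arguments on the possibly non-convex region $G[V_i]$, whose outer face merges the exterior of $G$ with the region beyond the rim.

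\textbf{What your sketch needs.} To make it go through you need an ingredient you never invoke: a shortest $(s,t)$-path $Q_0$ of $G$. It has length $N<2^{i_0+1}\le R$, so it lies in $G[V_1]\subseteq G[V_{i-1}]$.

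First, to rule out both ends of $\gamma$ lying on the boundary of $G$, you re-close $\gamma$ through the outer face of $G$ rather than that of $G[V_i]$. The two closing arcs differ by a loop in the outer face of $G[V_i]$. This loop has equal winding numbers around $s$ and $t$ because $Q_0$ connects them without meeting it. The claim that the rim end lies \emph{where $Q$ escapes} is neither justified nor needed: it is just some blocked edge $e_1$ of $\Gamma_{2R}$ whose outer side is a face of $G$.

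Second, your step \emph{it must intersect $V_{i-1}$ or separate $s,t$ which both lie in $V_{i-1}$; either way} does not follow as written. The outer face of $G[V_i]$ may reach deep inside the square of radius $R$. So the fact that the dual cycle separates two points of $V_{i-1}$ does not by itself force its blocked part $\gamma$ into $V_{i-1}$, since the closing part could a priori pass between them. The correct reason is that $Q_0$ must cross the separating dual cycle. It can only do so at an edge crossed by $\gamma$, because the closing part runs through the outer face and meets no edge. Hence $\gamma$ crosses an edge whose endpoints are at $\ell_\infty$-distance at most $R-1$ from $s$.

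\textbf{The count.} With this in place, the count is exact. Walk along $\gamma$ from the face inside $e_1$ (centre at $\ell_\infty$-distance $2R-\frac12$) to a face incident to that edge of $Q_0$ (centre at distance at most $R-\frac12$). For each $d\in\{R+\frac12,\ldots,2R-\frac12\}$, the first descent of the centre distance from $d$ to $d-1$ crosses a blocked edge of $\Gamma_{d-1/2}$. The one on $\Gamma_R$ lies in $E[V_{i-1}]$, so your \emph{none counted in $k_{i-1}$} is slightly off. However, $e_1$ itself compensates, giving one blocked edge on each $\Gamma_r$, $R<r\le 2R$, and hence $k_i-k_{i-1}\ge R$.
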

\begin{proof}[Proof sketch]
The idea behind this proof is relatively simple: as not only $G[V_i]$ but also its subgraph $G[V_{i-1}]$ do not contain any open $(s,t)$-path, a certain number of blockages must be present in $E[V_i] \setminus E[V_{i-1}]$ to separate $s$ from $t$. This latter number is necessarily proportional to the radius of $G[V_i]$, in fact with a factor $\frac{1}{2}$. A full technical version of the proof is available in Appendix~\ref{app:double}, and its major arguments are summarized below.

Suppose, towards a contradiction, that $k_i-k_{i-1}<R$, where $R=2^{i_0+i-1}$. Observe that there are $R$ concentric square boundaries surrounding $V_{i-1}$ and contained in $V_i$, one of which, say $\Gamma$, must contain no blocked edge.
Fix an open $(s,t)$-path $P$ of $G$: it leaves $V_i$ and therefore crosses $\Gamma$ at least twice. We can also fix any $(s,t)$-path $Q$ contained in $V_{i-1}$, which is necessarily blocked. Together, the $(s,t)$-paths $P$ and $Q$ form a cycle $C$. Selecting $P$ and $Q$ such that $C$ minimizes its enclosed area leads to a contradiction: since $G$ is solid, a portion of $\Gamma$ (which is open) provides an open shortcut through the interior of $C$, yielding a cycle with strictly smaller enclosed area.
\end{proof}

Using this lemma, we can now derive the competitive ratio of the strategy. Note that \dflw\ is correct in the sense that it reaches target $t$ on any feasible instance made up of unit-weighted grids. Indeed, Theorem~\ref{thm:starting-strategy} shows that \flw\ either reaches $t$ or certifies that the input is not feasible: as we are considering balls of increasing sizes around $s$, at some point, one ball has to contain an open $(s,t)$-path. Now, it remains to prove our main result: \dflw\ achieves ratio $O(\sqrt{k})$ on all unit-weighted solid grids.

\begin{theorem}
\dflw\ strategy has competitive ratio
$O(\sqrt{k})$ on unit-weighted solid grids.
\label{thm-ratio-all-grids}
\end{theorem}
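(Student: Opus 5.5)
We bound the total distance traversed by \dflw\ by summing the costs of its $\lambda$ iterations. We then compare this sum with $\dopt$, using Lemma~\ref{lemma:incr-k} to relate the ball radii to $k$. Write $r_i = 2^{i_0+i}$ for the radius of $V_i$; since $2^{i_0}\le N < 2^{i_0+1}$, we have $r_i \in [2^{i-1}N, 2^i N)$.

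\emph{Cost of one iteration.} $G[V_i]$ is an $\ell_\infty$-ball of a solid grid, but it need not itself be solid. I will first check, or if necessary adjust the argument, that Theorem~\ref{thm:starting-strategy} still applies to it. The point is that the area-minimality arguments of Lemma~\ref{lem:red-edge-routing} and Theorem~\ref{thm:rb-routing} only use unit squares inside cycles of the subgraph, and an induced subgraph of a solid grid restricted to a square window keeps this property. Iteration $i$ therefore costs at most $N + O(k_i)$, where $N$ bounds the length of $P$ inside the ball. Returning to $s$ along visited edges (line~\ref{line-return-s}) costs at most the length of that walk, so it does not change the order. Iteration $i$ thus costs $O(N + k_i)$. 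Since $k_i \le |E[V_i]| = O(r_i^2)$, a trivial bound is $O(r_i^2)$. The total is therefore
\[
\sum_{i=1}^{\lambda} O(N+k_i) \;=\; O\Big(\lambda N + \sum_{i=1}^{\lambda} \min(k_i, r_i^2)\Big).
\]

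\emph{Lower bound on $\dopt$.} If $\lambda\ge 2$, then $G[V_{\lambda-1}]$ contains no open $(s,t)$-path. Hence every open path leaves $V_{\lambda-1}$, so $\dopt \ge r_{\lambda-1} \ge 2^{\lambda-2}N$. Also $\dopt\ge N$ always holds. By Lemma~\ref{lemma:incr-k}, $k \ge k_{\lambda-1} \ge \sum_{i=2}^{\lambda-1} 2^{i_0+i-1} = \Omega(r_{\lambda-1})$. Consequently $\dopt = \Omega(r_{\lambda-1})$ with $r_{\lambda-1} = O(k)$, and, crucially, $k$ is at least linear in the radius.

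\emph{Combining.} The sum $\lambda N + \sum_i \min(k, r_i^2)$ is dominated by its last terms, since the $r_i$ grow geometrically. So the total distance is $O(\lambda N + \min(k, r_\lambda^2))$, where $r_\lambda = 2r_{\lambda-1} = O(\dopt)$. The term $\lambda N$ satisfies $\lambda N \le 2^{\lambda} N = O(\dopt)$. For the main term, $\min(k, r_\lambda^2)/\dopt \le \min(k, O(\dopt^2))/\dopt \le O(\sqrt{k})$, since $\min(k,x^2)/x\le\sqrt{k}$ for every $x>0$. The case $\lambda=1$ is Theorem~\ref{thm-small-grids} with $\alpha$ constant, because $|E[V_1]| = O(N^2)$. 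Summing up, the competitive ratio is $O(\sqrt k)$.

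\emph{Main obstacle.} The delicate step is the sum over iterations: each iteration may rediscover blockages already counted, so the cost is $\sum_i k_i$ rather than $k$. The geometric growth guaranteed by Lemma~\ref{lemma:incr-k} is what makes this sum behave. It gives $k_\lambda - k_i \ge \sum_{j>i} 2^{i_0+j-1}$, so for $i\le \lambda-1$ the earlier $k_i$ are bounded by $O(r_i^2)$, which is geometrically summable. I would try to bound $\sum_{i<\lambda} k_i \le \sum_{i<\lambda} O(r_i^2) = O(r_{\lambda-1}^2)$ together with $\sum_{i<\lambda} k_i \le \lambda k$. The second bound is too weak alone, so I will use $\min(\lambda k, O(r_{\lambda-1}^2))$. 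Dividing by $\dopt = \Omega(r_{\lambda-1})$, the ratio is at most $\min(\lambda k/r_{\lambda-1},\, r_{\lambda-1})$. Since $\lambda = O(\log(r_{\lambda-1}/N))$, a logarithmic loss appears. To remove it, I plan to use the sharper fact that the first term only needs $\sum_{i<\lambda} k_i \le \sum_i \min(k, r_i^2)$, which is geometric up to the index where $r_i^2\ge k$ and at most $\lambda k$ afterwards. After that index each additional iteration doubles $r_i$, so its cost $k/r_i$ relative to the radius decays geometrically. This gives $O(\sqrt k)$ overall.
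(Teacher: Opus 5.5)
Your ``Combining'' paragraph is wrong as written. The sum $\sum_{i}\min(k,r_i^2)$ is not dominated by its last term: once $r_i^2\ge k$, every term equals $k$, so the sum can be of order $k\log_2(r_\lambda/\sqrt{k})$ rather than $O(\min(k,r_\lambda^2))$. You notice this yourself in the last paragraph, and the repair you sketch there is correct. Made precise, it goes as follows. In all cases $\dopt\ge r_\lambda/2$: if $\lambda=1$, then $\dopt\ge N\ge r_1/2$, and otherwise $\dopt>r_{\lambda-1}$. Since $r_i\le r_\lambda$ and $k_i\le\min(k,Cr_i^2)$ for an absolute constant $C$,
\[
\frac{1}{\dopt}\sum_{i=1}^{\lambda}k_i\;\le\;2\sum_{i=1}^{\lambda}\min\Bigl(\frac{k}{r_i},\,Cr_i\Bigr)\;\le\;8\sqrt{Ck}.
\]
The last inequality holds because the terms $Cr_i$ double up to the crossover $r_i\approx\sqrt{k/C}$ and the terms $k/r_i$ halve after it, so each side sums to less than $2\sqrt{Ck}$. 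Together with $\lambda N=O(\dopt)$, this is a complete proof, uniform in $\lambda$. It should replace both the ``Combining'' paragraph and the intermediate $\min(\lambda k/r_{\lambda-1},r_{\lambda-1})$ attempt.

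Repaired this way, your route is genuinely different from the paper's and simpler. The paper writes $k=M^{2-c}$ with $M=r_\lambda$ and treats $c\le 0$ and $c>0$ separately. It splits the sum at $\lambda_0=\lambda-\lceil c\log_2 k\rceil$ rather than at the natural crossover $r_i\approx\sqrt{k}$. With that split, its first-part bound $k^{\frac{1}{2-c}-c}$ is $O(\sqrt{k})$ only for $c\le 3/2$. This is why the paper needs Lemma~\ref{lemma:incr-k} (to get $k\ge M/8$), separate handling of $\lambda\in\{1,2\}$, and a calculus bound on $(x+1)2^{-x/4}$. Your argument uses only $k_i\le\min(k,|E[V_i]|)$ and $\dopt>r_{\lambda-1}$. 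In particular, although you call linearity of $k$ in the radius ``crucial'', your final argument never uses Lemma~\ref{lemma:incr-k}.

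Your worry that $G[V_i]$ might not be solid is legitimate; the paper applies Theorem~\ref{thm:starting-strategy} to it without comment. It is easily settled. A cycle of $G[V_i]$ is a cycle of $G$, so every lattice point it encloses is a vertex of $G$, as in the solidity claim of the appendix. Each such point also lies in the convex square defining $V_i$. Since $G[V_i]$ is induced, all enclosed unit squares are present.
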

\begin{proof}
Since there exists an open $(s,t)$-path in $G$, there necessarily exists an index $\lambda$ such that $G[V_{\lambda}]$ contains an open $(s,t)$-path. Let $\lambda$ be the
smallest such index. Hence, \flw\ succeeds on $G[V_{\lambda}]$ and fails on
$G[V_i]$ for every $i<\lambda$.
By minimality of $\lambda$, $\dopt>2^{i_0+\lambda-1}$. 
Indeed, an open $(s,t)$-path of length at most $2^{i_0+\lambda-1}$ cannot leave $V_{\lambda-1}$, since every vertex on such a path is at $\ell_\infty$-distance at most $2^{i_0+\lambda-1}$ from $s$.

Recall by Theorem~\ref{thm:starting-strategy} that, when ran on $G[V_i]$,
strategy \flw\ produces a walk of length $O\!\left(k_i+2^{i_0+i}\right)$ and then either reaches $t$, or states that no open $(s,t)$-path exists in $G[V_i]$. When returning to $s$ (line~\ref{line-return-s} of Algorithm~\ref{algo:Doublingflw}), the traveller traverses at most the distance of the outward journey. Consequently, the total length
$L$ of the walk produced by \dflw\ satisfies:
$
    L
    =O\!\left(
        \sum_{i=1}^{\lambda} \left(k_i+2^{i_0+i}\right)
      \right)
$.
The geometric part of this sum gives $\sum_{i=1}^{\lambda} 2^{i_0+i}=O(2^{i_0+\lambda})=O(\dopt)$, where the last equality follows from
$\dopt>2^{i_0+\lambda-1}$. It therefore remains to prove that: $$\frac{1}{\dopt}\sum_{i=1}^\lambda k_i =O(\sqrt{k}).$$

Since blocked edges outside $G[V_{\lambda}]$ do not affect the execution of \dflw , we can assume w.l.o.g.\ that $k=k_{\lambda}$ for simplicity. We also set $M=2^{i_0+\lambda}$: thus $\dopt>M/2$.
Recall by definition that $G[V_i]$ is contained in a full square grid with $2^{i_0+i+1}+1$ vertices on each side. By grid topology, we have $\vert E(G[V_i]) \vert \leq
    2\left(2^{i_0+i+1}+1\right)2^{i_0+i+1}$.
Since $2^{i_0+i+1}\geq 1$. It follows, using rough upper bounds, that:
\begin{equation}
\label{eq:boundki}
k_i \leq |E(G[V_i])|\leq 2\cdot 4^{i_0+i+2}.
\end{equation}

Note that the case $\lambda=1$ was already considered in Theorem~\ref{thm-small-grids}, because $\vert E[V_1] \vert \le 4N^2$. As another specific case, we consider $\lambda=2$. We have, using Equation~\eqref{eq:boundki}: $$\frac{k_1+k_2}{\dopt} \le 4\frac{k_2}{M} \le 4\sqrt{2} \frac{2^{i_0+4}}{2^{i_0+2}} \sqrt{k_2} = O(\sqrt{k}).$$

From now, we consider the general case $\lambda\geq 3$. We fix a real value $c=2-\frac{\log_2 k}{\log_2 M}$, or said differently $k=M^{2-c}$. We distinguish two scenarii.

\medskip
\noindent
\textbf{Scenario 1: $c\leq 0$.}
Here, $k\geq M^2$. Using $\dopt>M/2$ and $M=2^{i_0+\lambda}$, we obtain
\begin{align*}
    \frac{1}{\dopt}\sum_{i=1}^{\lambda} k_i
    &\leq
    \frac{4^{i_0+3}}{M}\sum_{i=1}^{\lambda} 4^{i} \quad \text{using Eq.(\ref{eq:boundki})}\\
    &\leq\frac{4^{i_0+3}}{M} \frac{4^{\lambda+1}}{3} \leq \frac{4^{4}M^2}{3M} \le \frac{4^4}{3}\sqrt{k}.
\end{align*}

\medskip
\noindent
\textbf{Scenario 2: $c>0$.}
As a first observation, let us show that $c$ is bounded above by $1$, up to an irrelevant additive adjustment. Indeed, by Lemma~\ref{lemma:incr-k},
for every $2\le i< \lambda$, $k_i\geq k_{i-1}+2^{i_0+i-1}$. Iterating this inequality gives $$k = k_r \ge \sum_{i=2}^{\lambda-1}2^{i_0+i-1} = 2^{i_0} (2^{\lambda-2}-1)\geq 2^{i_0+\lambda-3} = \frac{M}{2^3}.$$
where the last inequality holds since $\lambda\geq 3$. It follows that $c=2-\frac{\log_2(k)}{\log_2(M)}\leq 1+\frac{3}{\log_2(M)}$.
Note that value $\frac{3}{\log_2(M)}$ can be assumed to be smaller than $\frac{1}{2}$, as otherwise $M$ is necessarily upper-bounded by a constant value and so is $k$. Thus, we can consider $c \le \frac{3}{2}$.

We now define $\lambda_0=\max\left\{\lambda-\left\lceil c\log_2(k)\right\rceil, 0\right\}$ and split the sum into:
$$
    \frac{1}{\dopt}\sum_{i=1}^\lambda k_i
    =
    \frac{1}{\dopt}\sum_{i=1}^{\lambda_0}k_i
    +
    \frac{1}{\dopt}\sum_{i=\lambda_0+1}^\lambda k_i.
$$

For both terms, we prove that they are asymptotically bounded by $\sqrt{k}$.
%

First, using Equation~(\ref{eq:boundki}) and $\dopt> 2^{i_0+\lambda-1}$, we have:
\begin{align*}
\frac{1}{\dopt}\sum_{i=1}^{\lambda_0} k_i \leq
\frac{2}{2^{i_0+\lambda-1}}
\sum_{i=1}^{\lambda_0}\left(2^{i_0+i+2}\right)^2 \leq
\frac{2^{i_0+\lambda+3}}{2^{i_0+\lambda-1}}
\sum_{i=1}^{\lambda_0}2^{i_0+i+2} & \leq
2^6\,2^{i_0+\lambda_0+1}\\
& =
2^6\frac{M}{k^c} =
2^6 k^{\frac{1}{2-c}-c}
\end{align*}

Since $0\le c\le \frac{3}{2}$, then $\frac{1}{2-c}-c\leq \frac12$.
Consequently, 
$\frac{1}{\dopt}\sum_{i=1}^{\lambda_0} k_i = O(\sqrt{k})$.


Second, since $k_i\leq k$ for every $i\leq \lambda$, we have
$$\frac{1}{\dopt}\sum_{i=\lambda_0+1}^{\lambda} k_i \leq
\frac{1}{\dopt}(\lambda-\lambda_0)k \leq
\frac{2}{M}\bigl(c\log_2(k)+1\bigr)k \le 2\bigl(c\log_2(k)+1\bigr)k^{1-\frac{1}{2-c}}.$$

Define:
$$
    R(c,k)
    :=
    \frac{2(c\log_2(k)+1)\,
    k^{1-\frac{1}{2-c}}}{\sqrt{k}}
    =
    2c\log_2(k)\,
    k^{\frac12-\frac{1}{2-c}}.
$$
As $0\le c < 2$, we have $\frac12-\frac{1}{2-c}=-\frac{c}{2(2-c)}\leq -\frac{c}{4}$.
Consequently,
$$
    R(c,k)
    \leq
    2(c\log_2(k)+1)2^{-\frac{c}{4}\log_2(k)}.
$$
Let $x=c\log_2(k)$. Then, $R(c,k) \leq 2(x+1) 2^{-x/4}$. The function $x\mapsto (x+1)2^{-x/4}$, for $x\geq0$, attains its maximum
at $x=\frac{4}{\ln(2)}-1$. Hence,
$$
    R(c,k) \leq \frac{4\cdot 2^{1/4}}{e\ln(2)} < 3.
$$
Therefore, $\frac{1}{\dopt}\sum_{i=\lambda_0+1}^{\lambda}k_i=O(\sqrt{k})$.

For Scenario 2, we also have $\frac{1}{\dopt}\sum_{i=1}^\lambda k_i = O(\sqrt{k})$. As this holds for both scenarii, then the competitive ratio of \dflw\ is $O(\sqrt{k})$.
\end{proof}

\section{Conclusion}

We exhibited a deterministic strategy achieving a competitive ratio $O(\sqrt{k})$ for the \kctp\ on unit-weighted solid grid graphs, and proved that this bound is asymptotically tight. 
To the best of our knowledge, this is the first competitive ratio for the \kctp\ that lies strictly between the previously known regimes of $O(1)$ and $O(k)$. This raises the question of whether other intermediate regimes can be obtained.

In~\cite{BeBeChDa24}, a lower bound of $\Omega(\log k)$ was identified for arbitrarily-weighted outerplanar graphs. Whether any strategy can achieve a competitive ratio matching this lower bound remains an open question. Nevertheless, the existence of an optimal competitive ratio of $9$ for outerplanar graphs in the unit-weight setting suggests that this family of graphs is a promising candidate for exhibiting another sublinear competitive regime.

Moreover, it seems at first sight that our lower bound of $\Omega(\sqrt{k})$ can be extended to $d$-dimensional grid graphs, for $d\geq 2$, yielding a lower bound of $\Omega(k^{1-1/d})$. It would be interesting to investigate whether the \dflw\ strategy can also be generalized to higher dimensions.

\bibliography{ctp}

@article{BeWe15,
	title = {An optimal randomized online algorithm for the k-{Canadian} {Traveller} {Problem} on node-disjoint paths},
	volume = {30},
	number = {1},
	journal = {J. Comb. Optim.},
	author = {Bender, M. and Westphal, S.},
	year = {2015},
	pages = {87--96}
}

@article{We08,
	title = {A note on the $k$-{Canadian} {Traveller} {Problem}},
	volume = {106},
	number = {3},
	journal = {Inform. Proces. Lett.},
	author = {Westphal, S.},
	year = {2008},
	pages = {87--89}
}

@article{DeHuLiSa14,
	title = {Canadians {Should} {Travel} {Randomly}},
	journal = {Proc. of ICALP},
	author = {Demaine, E. D. and Huang, Y. and Liao, C. and Sadakane, K.},
	year = {2014},
	pages = {380--391}
}

@article{PaYa91,
	title = {Shortest paths without a map},
	author = {Papadimitriou, C. and Yannakakis, M.},
	journal = {Theor. Comput. Sci.},
	volume = {84},
	number = {1},
	pages = {127--150},
	year = {1991}
}

@article{XuHuSuZh09,
	title = {The {C}anadian traveller problem and its competitive analysis},
	volume = {18},
	number = {2},
	urldate = {2016-11-08},
	journal = {J. Comb. Optim.},
	author = {Xu, Y. and Hu, M. and Su, B. and Zhu, B. and Zhu, Z.},
	year = {2009},
	pages = {195--205}
}

@inproceedings{BaSc91,
  author    = {Bar{-}Noy, A. and Schieber, B.},
  title     = {The {C}anadian {T}raveller {P}roblem},
  booktitle = {Proc. of ACM/SIAM SODA},
  pages     = {261--270},
  year      = {1991}
}

@book{BoEl98,
  author    = {Borodin, A. and
               El{-}Yaniv, R.},
  title     = {Online computation and competitive analysis},
  publisher = {Cambridge Univ. Press},
  year      = {1998}
}

@article{AkSaAr16,
  author    = {V. Aksakalli and
               O. F. Sahin and
               I. Ari},
  title     = {An {AO\({}^{\mbox{*}}\)} Based Exact Algorithm for the Canadian Traveler
               Problem},
  journal   = {{INFORMS} Journal on Computing},
  volume    = {28},
  number    = {1},
  pages     = {96--111},
  year      = {2016}
}

@article{ChChWuWu15,
title = {{The $k$-Canadian Traveller Problem on Equal-Weight Graphs}},
author = {Chan, H. and
Chang, J. and 
Wu, H. and Wu, T.},
journal = {Proc. of WCMCT},
pages = {135--137},
year = {2015}
}

@article{LiHu14,
  author    = {C. Liao and
               Y. Huang},
  title     = {The Covering Canadian Traveller Problem},
  journal   = {Theor. Comput. Sci.},
  volume    = {530},
  pages     = {80--88},
  year      = {2014}
}

@inproceedings{HaXe23,
  author       = {N. Hahn and
                  M. Xefteris},
  title        = {The Covering Canadian Traveller Problem Revisited},
  booktitle    = {Procs. of {MFCS}},
  series       = {LIPIcs},
  volume       = {272},
  pages        = {53:1--53:12},
  year         = {2023}
}

@article{BeSa23,
  author       = {P. Berg{\'{e}} and
                  L. Sala{\"{u}}n},
  title        = {The influence of maximum (\emph{s}, \emph{t})-cuts on the competitiveness
                  of deterministic strategies for the {Canadian Traveller Problem}},
  journal      = {Theor. Comput. Sci.},
  volume       = {941},
  pages        = {221--240},
  year         = {2023}
}

@article{fried2013complexity,
  title={Complexity of Canadian traveler problem variants},
  author={D. Fried and S. E. Shimony and A. Benbassat and C. Wenner},
  journal={Theor. Comput. Sci.},
  volume={487},
  pages={1--16},
  year={2013},
  publisher={Elsevier}
}

@article{AlYiAk21,
  author       = {A. F. Alkaya and
                  S. Yildirim and
                  V. Aksakalli},
  title        = {Heuristics for the {Canadian Traveler Problem} with neutralizations},
  journal      = {Comput. Ind. Eng.},
  volume       = {159},
  pages        = {107488},
  year         = {2021}
}

@inproceedings{EyKeHe10,
  author       = {P. Eyerich and
                  T. Keller and
                  M. Helmert},
  title        = {High-Quality Policies for the {Canadian Traveler's Problem}},
  booktitle    = {Proces. of {AAAI}},
  pages        = {51--58},
  publisher    = {{AAAI} Press},
  year         = {2010}
}

@article{BeBeChDa24,
  author       = {L. Beaudou and
                  P. Berg{\'{e}} and
                  V. Chernyshev and
                  A. Dailly and
                  Y. Gerard and
                  A. Lagoutte and
                  V. Limouzy and
                  L. Pastor},
  title        = {The {Canadian Traveller Problem} on unit-weighted and arbitrarily weighted outerplanar graphs},
  journal      = {Theor. Comput. Sci.},
  volume       = {1082},
  pages        = {116096},
  year         = {2026},
  doi          = {10.1016/J.TCS.2026.116096}
}

@inproceedings{UmLe97,
  author       = {C. Umans and
                  W. J. Lenhart},
  title        = {Hamiltonian Cycles in Solid Grid Graphs},
  booktitle    = {Procs. of {FOCS}},
  pages        = {496--505},
  publisher    = {{IEEE} Computer Society},
  year         = {1997}
}

@inproceedings{BaEsXe22,
  author       = {E Bampis and
                  B. Escoffier and
                  M. Xefteris},
  editor       = {Parinya Chalermsook and
                  Bundit Laekhanukit},
  title        = {{Canadian Traveller Problem} with Predictions},
  booktitle    = {Procs. of {WAOA}},
  series       = {Lecture Notes in Computer Science},
  volume       = {13538},
  pages        = {116--133},
  publisher    = {Springer},
  year         = {2022},
  doi          = {10.1007/978-3-031-18367-6\_6}
}

\appendix

\section{Proofs of Section~\ref{sec:prelim}}

\begin{proof}[Proof of Theorem~\ref{thm:lower-bound-weight}]
Let $A$ be a deterministic strategy for the \kctp. We first prove the result for unit-weighted grid graphs.

For two integers $k,\ell\geq 1$, let $H_{k,\ell}$ be the graph defined as follows:
\begin{itemize}
    \item $P=v_1\ldots v_{2k+1}$ and $P'=v'_1\ldots v'_{2k+1}$ are two paths on $2k+1$ vertices;
    \item for each $0\leq i\leq k$, let
    $Q_i=u_0^{(i)}\ldots u_\ell^{(i)}$ be a path of length $\ell$ such that
    $u_0^{(i)}=v_{2i+1}$ and $u_\ell^{(i)}=v'_{2i+1}$.
\end{itemize}
We set $s=v_1$ and $t=v'_{2k+1}$ (see Figure~\ref{fig:LowerBoundGrid}). All edges of $H_{k,\ell}$ have unit weight.

Consider the following set of $k+1$ edges: $F = \{u_{\ell-1}^{(i)}u_\ell^{(i)}:0\leq i\leq k\}$. It contains all horizontal edges incident to the rightmost vertical line. Set $F$ is an $(s,t)$-separator, hence the traveller must traverse at least one edge of set $F$ in order to reach $t$. We will consider blockages only inside set $F$.
More precisely, we fix $E_*$ as a set made up of $k$ among the $k+1$ edges of $F$, such that it contains the $k$ first edges which strategy $A$ aims at traversing to attain the rightmost vertical line. Concretely, strategy $A$ first tries to make the traveller pass through some $(u_{\ell-1}^{(i_1)},u_\ell^{(i_1)})$ but fail as we fix $(u_{\ell-1}^{(i_1)},u_\ell^{(i_1)}) \in E_*$. So, strategy $A$ tries to pass through another $(u_{\ell-1}^{(i_2)},u_\ell^{(i_2)})$ but the traveller is blocked as we fix $(u_{\ell-1}^{(i_2)},u_\ell^{(i_2)}) \in E_*$. Finally, the traveller reaches the rightmost vertical line by passing through some $(u_{\ell-1}^{(i_{k+1})},u_\ell^{(i_{k+1})})$ which is open.



\begin{figure}[ht]
	\centering
	\scalebox{0.5}{\begin{tikzpicture}

\def\k{4}      
\def\ell{15}   

\pgfmathtruncatemacro{\twok}{2*\k}
\pgfmathtruncatemacro{\twokpo}{2*\k+1}
\pgfmathtruncatemacro{\ellmo}{\ell-1}

\draw[<->, dotted, line width = 2pt, color = gray] (-1,0) to (-1,\twok);
\node[color = gray, scale = 1.5,text width = 1.5cm] at (-2,\k) {$2k+1$ vertices};


\foreach \i in {0,...,\twok} {
    \node[vertex] (v-\i)  at (0,\i) {};
    \node[vertex] (vp-\i) at (\ell,\i) {};
}

\draw[line width = 7pt, color=white!40!orange] (v-\twok) -- (v-6) -- (vp-6) -- (vp-0);

\foreach \i in {1,...,\twok} {
    \pgfmathtruncatemacro{\imo}{\i-1}
    \draw[free] (v-\imo) -- (v-\i);
    \draw[free] (vp-\imo) -- (vp-\i);
}

\foreach \i in {0,...,\k} {

    \pgfmathtruncatemacro{\y}{2*\i}

    \foreach \j in {1,...,\ellmo} {
        \node[vertex] (u-\i-\j) at (\j,\y) {};
    }

    \draw[free] (v-\y) -- (u-\i-1);

    \foreach \j in {2,...,\ellmo} {
        \pgfmathtruncatemacro{\jmo}{\j-1}
        \draw[free] (u-\i-\jmo) -- (u-\i-\j);
    }


}
\foreach \i in {0,...,2} {
    \pgfmathtruncatemacro{\y}{2*\i}
    \draw[blocked] (u-\i-\ellmo) -- (vp-\y);
}

\draw[blocked] (u-4-\ellmo) -- (vp-8);
\draw[free] (u-3-\ellmo) -- (vp-6);


\node[terminal] (s) at (v-\twok) {};
\draw (s) node[above left,scale=2,color = orange] {$s$};

\node[terminal] (t) at (vp-0) {};
\draw (t) node[right,scale=2,color = orange] {$t$};

\end{tikzpicture}}
	\caption{An example of grid graph on which any strategy achieves at least ratio $2k-1$.}\label{fig:LowerBoundGrid}
	\end{figure}

Said differently, the $k$ first paths $Q_i$ traversed by the traveller are unsuccessful. These paths are traversed in both directions as the traveller is blocked on their extremity. Moreover, the traveller traverses path $Q_{i_{k+1}}$ in one direction to reach the rightmost vertical line. As a consequence, at least $2k(\ell-1)$ edges are traversed horizontally by the traveller in this configuration $E_*$. 
Vertically, the traveller traverses at least $2k$ edges from $s$ to $t$.
Hence, denoting by $L_A(k,\ell)$ the total length of the walk produced by $A$, we have:
$$
L_A(k,\ell)\geq 2k(\ell-1)+2k = 2k\ell.
$$

Every shortest $(s,t)$-path in $H_{k,\ell}$ has length $2k+\ell$, and one is open as $(u_{\ell-1}^{(i_{k+1})},u_\ell^{(i_{k+1})})$ is not blocked (see the orange path in Figure~\ref{fig:LowerBoundGrid}). Therefore,
$$
\dopt(k,\ell)=2k+\ell.
$$
For the competitive ratio, it follows that:
$$
\frac{L_A(k,\ell)}{\dopt(k,\ell)}
\geq
\frac{2k\ell}{2k+\ell}.
$$
Fixing $\ell=(2k)^2$ yields:
$$
\frac{2k\ell}{2k+\ell}
=
\frac{8k^3}{4k^2+2k}
=
\frac{8k^3 + 4k^2 - 4k^2}{4k^2+2k}
\ge 
2k-1.
$$
Hence, the competitive ratio of any deterministic strategy $A$ on unit-weighted grid graphs is at least $2k-1$.

Observe that $H_{k,\ell}$ is indeed a grid graph, but it is not a solid grid graph, as it contains non-unit bounded faces. We now adapt the construction to weighted solid grid graphs.

Let $H'_{k,\ell}$ be obtained from $H_{k,\ell}$ by adding all the missing vertices and edges between consecutive paths $Q_i$ and $Q_{i+1}$, so that $H'_{k,\ell}$ is a rectangular grid, and therefore a solid grid graph. We assign weight $1$ to every edge of $H_{k,\ell}$ and weight
$W=(k\ell)^2$ to every edge of $E[H'_{k,\ell}]\setminus E[H_{k,\ell}]$.

We use the same ``blocking adversary'' as above on the edges of $H_{k,\ell}$. If we assume that $A$ never makes the traveller traverse an edge of weight $M$, his exploration is confined to $H_{k,\ell}$, and the previous reasoning gives again a competitive ratio of at least $2k-1$.

Suppose now that $A$ traverses an edge of weight $M$. Its total cost is then at least $M$. On the other hand, the optimal path described above uses only edges of $H_{k,\ell}$ and therefore still has cost $2k+\ell$. Thus,
$$
\frac{L_A(k,\ell)}{\dopt(k,\ell)}
\geq
\frac{(k\ell)^2}{2k+\ell}.
$$
For $\ell=(2k)^2$, this quantity is larger than $2k-1$. Consequently, in both cases, the competitive ratio of $A$ is at least $2k-1$.
Therefore, the lower bound $2k-1$ holds both for unit-weighted grid graphs and for weighted solid grid graphs.
\end{proof}

\section{Proofs of Section~\ref{sec-double}} \label{app:double}

\begin{proof}[Proof of Lemma~\ref{lemma:incr-k}]
Let $R=2^{i_0+i-1}$. Thus, $V_{i-1}$ is induced by the vertices at
$\ell_\infty$-distance at most $R$ from $s$, while $V_i$ is induced by
the vertices at $\ell_\infty$-distance at most $2R$ from $s$.

Suppose, towards a contradiction, that $k_i-k_{i-1}<R$. For every
$r\in\{R+1,\ldots,2R\}$, let
$$
    \Gamma_r=\{x\in\mathbb{Z}^2:d_{\infty}(x,s)=r\}.
$$
There are $R$ such square boundaries, and every grid edge is contained
in at most one of them. Since we assumed that there are fewer than $R$ blocked edges in
$E(G[V_i])\setminus E(G[V_{i-1}])$, there exists
$r_0\in\{R+1,\ldots,2R\}$ such that no edge of $G$ contained in
$\Gamma_{r_0}$ is blocked. Fix such an $r_0$ and write
$\Gamma=\Gamma_{r_0}$.

Let $Q$ be a shortest $(s,t)$-path contained in $G[V_1]$. Such a path exists
as $N=d_{\emptyset}(s,t)$ is smaller than $2^{i_0+1}$, the radius of $V_1$.
Since $i\geq 2$, we have $V_1\subseteq V_{i-1}$, and therefore
$Q\subseteq V_{i-1}$. In particular, $Q$ is contained strictly inside
the square bounded by $\Gamma$.

As \flw\ fails on $(G[V_i],s,t)$ (because $i<\lambda$) and \flw\ is correct, there is no open $(s,t)$-path contained in $G[V_i]$. Let $P$ be an open $(s,t)$-path in $G$. We may assume that $P$ is simple. Necessarily, $P$ contains a vertex outside $V_i$.

Let $z\in V(P)\setminus V_i$. Following $P$ from $s$ to $t$, let $a$ be the last vertex of $P\cap Q$ preceding $z$ (it exists as $s$ is a candidate), and let $b$ be the first vertex of $P\cap Q$ following $z$ (it exists as $t$ is a candidate). 
By their choice,
$$
    V(P[a,b])\cap V(Q)=\{a,b\}.
$$
Consequently, $C=P[a,b]\cup Q[a,b]$ is a simple cycle. Moreover, $P[a,b] \subsetneq Q$ contains the vertex $z\notin V_i$. Figure~\ref{fig:doubling-cycle} illustrates cycle $C$ with both sections $P[a,b]$ (in red) and $Q[a,b]$ (in blue).

Among all choices of an open $(s,t)$-path $P$, a vertex $z\in V(P)\setminus V_i$, and the corresponding vertices $a,b$ defined above, select a combination for which the region enclosed by $C$ has minimum area. Let $\operatorname{Int}(C)$ contain all vertices of $G$ surrounded by $C$ in the grid embedding. We write $\overline{\operatorname{Int}}(C)=\operatorname{Int}(C)\cup C$
for the closed region bounded by $C$. As $G$ is a solid grid, we can show that every lattice point of $\overline{\operatorname{Int}}(C)$ is a vertex of $G$.

\begin{claim}\label{claim:solidity}
For every $x\in\mathbb{Z}^2\cap\overline{\operatorname{Int}}(C)$, we have $x\in V(G)$.
\end{claim}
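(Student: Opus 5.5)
The claim follows from the definition of solid grids. A bounded face of the embedding of $G$ either has unit area or does not exist. The plan is to show that the closed disk bounded by $C$ is exactly the union of unit squares whose four vertices and four edges all belong to $G$. Lattice points in $\overline{\operatorname{Int}}(C)$ are then corners of such squares, and hence vertices of $G$.

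\textbf{Step 1 (faces inside $C$).} $C$ is a simple cycle of the plane graph $G$, so by the Jordan curve theorem it bounds a topological disk $D=\overline{\operatorname{Int}}(C)$. Every point of $D$ not lying on an edge of $G$ belongs to some face $f$ of the embedding of $G$. Such an $f$ is contained in the open interior of $C$: it is connected and disjoint from $C$, and it meets $D$. So $f$ is a bounded face, since the unbounded face contains points outside $D$. Therefore $D$ is the closure of the union of the faces it contains, plus edges and vertices of $G$ lying inside $D$.

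\textbf{Step 2 (solidity).} Each bounded face of $G$ has unit area. Its boundary is a closed walk in the grid, and an induced subgraph of $\mathcal{G}_{\mathbb{Z}^2}$ has no interior obstruction inside such a face. Hence the face is exactly an open unit lattice square $(x,x+1)\times(y,y+1)$, and its four corners are vertices of $G$. I expect this to be the main point of care: I need to argue that a unit-area face bounded by grid edges must be a single lattice cell. Its boundary encloses a union of lattice cells. A face of area $1$ therefore consists of exactly one open cell, and its boundary, the $4$-cycle around that cell, is made of edges of $G$. In particular, its four corners lie in $V(G)$.

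\textbf{Step 3 (conclusion).} By Steps 1 and 2, $D$ equals $C$ union the closures of finitely many unit cells, all of whose corners are vertices of $G$. Let $x\in\mathbb{Z}^2\cap D$. If $x$ lies on $C$, it is a vertex of $C$, since lattice points on grid edges are their endpoints; so $x\in V(G)$. Otherwise $x$ lies in the open interior of $C$. Every cell having $x$ as a corner meets a small neighbourhood of $x$, which is contained in $D$. That neighbourhood is covered by the closed cells of $D$, so at least one of them contains $x$. A lattice point contained in a closed unit cell is one of its corners, hence $x\in V(G)$. This gives the claim.
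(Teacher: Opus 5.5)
Your proof is correct, but it is organized differently from the paper's.

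\textbf{The paper's route.} The paper argues by contradiction at the offending point. If a lattice point $x$ in the open interior of $C$ were not a vertex, none of the four grid edges at $x$ would be in $G$. So the $2\times 2$ block of unit squares around $x$ lies in a single face. That face is inside $C$, hence bounded, and has area at least $4$, contradicting solidity.

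\textbf{Your route.} You first prove a structural lemma: every bounded face of a solid grid graph is a single open lattice cell whose four sides are edges of $G$. You then deduce that $\overline{\operatorname{Int}}(C)$ is tiled by closed cells of $G$, so each interior lattice point is a corner of such a cell and hence a vertex.

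\textbf{What each buys.} The paper's route is shorter and needs only the crude bound ``area $\ge 4$''. Yours costs one extra local argument but yields a stronger statement. That stronger statement is what the paper uses implicitly elsewhere, when it takes for granted that the $4$-cycle $xu_iu_{i+1}y$ bounding a unit square inside the enclosed region is present in $G$ (Lemma~\ref{lem:red-edge-routing} and Claim~\ref{claim:rb-separator}).

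\textbf{A step to tighten.} Step~2 is the thin spot, as you anticipated; ``no interior obstruction'' should be replaced by an explicit argument. Every point of $\mathbb{R}^2\setminus G$ lying on a grid line has a neighbourhood disjoint from $G$ that meets at least two open cells. Such a point is either an interior point of a grid edge not in $G$, or a lattice point not in $V(G)$, all of whose incident edges are then absent. Hence a face of unit area, which contains exactly one open cell $c$, contains no grid-line point. But an interior point of a side of $c$ that is not an edge of $G$ would lie in the same face as $c$. So all four sides of $c$ are edges of $G$, and the face is exactly $c$. With this sentence added, your argument is complete.
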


\begin{claimproof}
Suppose, for the sake of contradiction, that there exists $x\in\mathbb{Z}^2\cap\overline{\operatorname{Int}}(C)$ such that $x\notin V(G)$. Since every vertex of $C$ is a vertex of $G$, we necessarily have $x\in\operatorname{Int}(C)$.

Consider the face $f$ of $G$ containing $x$. Since $x$ lies in the interior of the simple cycle $C$, the face $f$ is an inner face of $G$. Moreover, since $G$ is an induced subgraph of the square grid and $x\notin V(G)$, none of the four grid edges incident with $x$ belongs to $G$. Consequently, the four unit squares incident with $x$ are contained in the same face $f$. Hence,
$$
    \operatorname{area}(f)\geq 4,
$$
contradicting the assumption that $G$ is solid, {\em i.e.} every inner face has unit area.
\end{claimproof}

\begin{figure}[t]
    \centering
    \begin{tikzpicture}[
        scale=1,
        every node/.style={font=\small},
        vertex/.style={circle, fill=black, inner sep=1.6pt}
    ]

        \draw[ultra thick]
            (-3,-3) rectangle (3,3);

        \coordinate (s) at (-2,0);
        \coordinate (a) at (-1.2,0.6);
        \coordinate (b) at (1.2,0.4);
        \coordinate (t) at (2,-0.2);

\fill[red!15]
    (a)
    to[out=-100,in=0] (-1.5,-3)
    to[out=180,in=-70] (-3.7,-1.5)
    to[out=110,in=180] (-2.5,2.8)
    to[out=0,in=160] (-1.5,1.8)
    to[out=-20,in=-120] (-0.3,3)
    to[out=60,in=160] (3.8,2.2)
    to[out=-20,in=60] (3.7,-2.0)
    to[out=-120,in=-80] (b)

    .. controls (0.7,1.2) and (0.2,-0.3) .. (a)
    -- cycle;

         
        \draw[thick, dashed]
            (-3,-3) rectangle (3,3);

          \node at (3.4,3.4) {$\Gamma$};

        \draw[thick, blue]
            (s)
           .. controls (-1.7,0.8) and (-1.2,0.9) .. (a)
            .. controls (0.2,-0.3) and (0.7,1.2) .. (b)
            .. controls (1.3, 0) and (1.6,0) .. (t);

\draw[thick, red]
    (a)
    to[out=-100,in=0] (-1.5,-3)
    to[out=180,in=-70] (-3.7,-1.5)

    to[out=110,in=180] (-2.5,2.8)
    to[out=0,in=160] (-1.5,1.8)

    to[out=-20,in=-120] (-0.3,3)

    to[out=60,in=160] (3.8,2.2)
    to[out=-20,in=60] (3.7,-2.0)

    to[out=-120,in=-80] (b);

        \node[vertex, label=below:$s$] at (s) {};
        \node[vertex, label=below:$t$] at (t) {};
        \node[vertex, label=below left:$a$] at (a) {};
        \node[vertex, label=right:$b$] at (b) {};

        \node[blue] at (0,-0.15) {$Q$};
        \node[red] at (0.2,4) {$P[a,b]$};

    \end{tikzpicture}
    \caption{The cycle formed by $P[a,b]\cup Q[a,b]$. The dashed arcs of $\Gamma$ represent the possible shortcuts.}
    \label{fig:doubling-cycle}
\end{figure}

We now show that a portion of $\Gamma$ gives an open shortcut for $P$.
Since $a,b\in Q\subseteq V_{i-1}$ and $P[a,b]$ contains a vertex
outside $V_i$, the path $P[a,b]$ must leave the square bounded by
$\Gamma$. Let $x$ be a vertex at which $P[a,b]$ leaves this square;
that is, $x\in\Gamma$ and the successor $y$ of $x$ along $P[a,b]$ lies
strictly outside $\Gamma$.

Let $x_1$ and $x_1'$ be the two neighbors of $x$ along $\Gamma$.

\begin{claim}
At least one of $x_1$ and $x_1'$ belongs to
$\overline{\operatorname{Int}}(C)$.
\end{claim}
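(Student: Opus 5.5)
We argue locally around $x$, using that $C$ is a simple closed curve in the plane and that, near $x$, the edge $xy$ of $C$ separates the small neighborhood of $x$. The vertex $x$ lies on $\Gamma$ with $d_\infty(x,s)=r_0$, and $y$ lies strictly outside the square bounded by $\Gamma$. We first treat the case where $x$ is not a corner of $\Gamma$. Then $x$ has exactly four grid neighbors: $y$ (outward), the inward neighbor $x^-$, and the two neighbors $x_1,x_1'$ along $\Gamma$. The four unit squares incident to $x$ are split by the line through $x_1,x,x_1'$. Two of them, the ones containing the edge $xy$, lie outside the square bounded by $\Gamma$, and two lie inside.

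The cycle $C$ passes through $x$ using the edge $xy$ and exactly one other edge $xw$, where $w$ is the predecessor of $x$ on $P[a,b]$. The segment $x$ cannot be an endpoint $a$ or $b$, since those are in $Q\subseteq V_{i-1}$, strictly inside $\Gamma$. Consequently, $w\in\{x_1,x_1',x^-\}$.

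\begin{itemize}
\item If $w\in\{x_1,x_1'\}$, that neighbor lies on $C\subseteq\overline{\operatorname{Int}}(C)$, and we are done.
\item If $w=x^-$, then $C$ passes straight through $x$ along the segment $x^-xy$. This straight segment locally splits the four unit squares at $x$ into two pairs: the squares on the $x_1$ side and those on the $x_1'$ side. Since $C$ is a simple closed curve (Jordan), exactly one side near $x$ lies in the interior region. The open half-edge from $x$ toward the corresponding neighbor, $x_1$ or $x_1'$, therefore enters $\operatorname{Int}(C)$.
\end{itemize}

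In the second case we must still pass from the half-edge to the lattice point. The segment $xx_1$ (say) has its interior points in the open interior of $C$. Since $C$ consists of grid edges and $xx_1\notin C$ (because $w=x^-$ and the successor is $y$), the segment cannot cross $C$ except at lattice points. Hence its endpoint $x_1$ lies in $\overline{\operatorname{Int}}(C)$. Note that we do not need $x_1\in V(G)$ a priori; Claim~\ref{claim:solidity} supplies that afterwards.

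The corner case needs separate handling. Suppose $x$ is a corner of $\Gamma$, so that $y$ is one of the two outward neighbors. Both neighbors along $\Gamma$ are then the remaining directions other than the outward one(s), and $w$ is either one of them (done) or the other outward neighbor. The latter is impossible: $w$ must be in or on $\Gamma$, because $y$ is the first vertex of $P[a,b]$ strictly outside $\Gamma$ after $x$, and the other outward neighbor is also strictly outside. So at a corner, $w$ is always $x_1$ or $x_1'$.

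The main obstacle is making the Jordan-curve "one side is interior" step rigorous for the straight-through case. I would handle it by the standard parity argument: take two points just off $x$, one on each side of the segment $x^-xy$ (for instance centers of the two unit squares adjacent to edge $xx^-$). These points are joined by a short segment crossing $C$ exactly once, transversally at the edge $xx^-$. Their winding numbers with respect to $C$ therefore differ by one, so exactly one of them is interior. The corresponding side's lattice neighbor of $x$ along $\Gamma$ is then in $\overline{\operatorname{Int}}(C)$, as argued above.
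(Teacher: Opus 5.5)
Your proposal is correct and is essentially the paper's argument (the two edges of $C$ at $x$ split a small neighbourhood into an interior and an exterior sector, and the interior sector contains one of the two directions of $\Gamma$), made rigorous by your case split on the predecessor $w$, the single transversal crossing at the edge $xx^-$, and the passage from the open half-edge to the lattice point $x_1$. The one thing to state explicitly is that, in the corner case, you are choosing $x$ as the vertex just before the \emph{first} vertex of $P[a,b]$ strictly outside $\Gamma$. This is a harmless choice and the natural reading of ``leaves this square'', but the bare condition ``$x\in\Gamma$ with successor strictly outside'' does not constrain $w$: at a corner, $w$ could otherwise be the other outward neighbour, and then the interior sector at $x$ could be the outer quadrant.
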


\begin{claimproof}
Indeed, in a sufficiently small neighborhood of $x$, the two edges of
$C$ incident with $x$ divide the neighborhood into an interior and an
exterior sector. One of these two edges is the edge of $P[a,b]$ by
which $P$ reaches $x$, and the other is $(x,y)$, the edge by which $P$ leaves
the square bounded by $\Gamma$. Hence one of the two directions of
$\Gamma$ incident with $x$ lies on the interior side of the Jordan
curve $C$. Therefore at least one of $x_1,x_1'$ belongs to
$\overline{\operatorname{Int}}(C)$.
\end{claimproof}

Assume w.l.o.g. that
$x_1\in\overline{\operatorname{Int}}(C)$, and traverse $\Gamma$ from
$x=x_0$ in the direction of $x_1$. Let
$$
    A=x_0x_1\cdots x_p
$$
be the maximal subpath of $\Gamma$ all of whose vertices belong to $\overline{\operatorname{Int}}(C)$. Note that, by Claim~\ref{claim:solidity}, $A$ is a path in $G$.

The path $A$ cannot traverse the whole of $\Gamma$. Indeed, $P[a,b]$ eventually returns to the interior of $\Gamma$, since $b\in Q\subseteq V_{i-1}$. At such a re-entry point, the same local Jordan-curve argument, with the roles of the interior and exterior reversed, shows that one of the two directions of $\Gamma$ lies on the exterior side of $C$. Thus, when $\Gamma$ is traversed from $x$ in the interior direction chosen above, one must eventually leave $\overline{\operatorname{Int}}(C)$.

Let $x_{p+1}$ be the vertex of $\Gamma$ immediately following $x_p$. By maximality, $x_{p+1}\notin\overline{\operatorname{Int}}(C)$. Since the edge $x_px_{p+1}$ joins the closed interior of $C$ to its exterior, and $x_{p+1}\notin C$; hence $x_p\in C$. 
Thus $A$ is an $(x_0,x_p)$-path whose endpoints lie on $C$ and which is otherwise contained in $\operatorname{Int}(C)$.
Furthermore, since $Q$ lies strictly inside $\Gamma$, neither $x_0$ nor
$x_p$ belongs to $Q$. Hence both belong to $P[a,b]$.

Since $A\subseteq\Gamma$ and $\Gamma$ contains no
blocked edge, $A$ is an open $(x_0,x_p)$-path. Replace the $(x_0,x_p)$-subpath of $P$ lying on the boundary of the region cut off by $A$ by the path $A$. Since both the replaced part and $A$ have the same endpoints and $A$ is contained in $\overline{\operatorname{Int}}(C)$, this yields an open $(s,t)$-path $P'$. 

Moreover, replacing $P$ by $P'$ removes a nonempty region from the region bounded by $C$. In particular, every cycle obtained from the modified portion of $P'$ and the corresponding portion of $Q$ is contained in $\overline{\operatorname{Int}}(C)$, and the cycle containing the modified portion has strictly smaller enclosed area than $C$.

Finally, since $P'$ is an open $(s,t)$-path and \flw\ fails on $G[V_i]$, correctness of \flw\ implies that $P'$ is not contained in $V_i$. Choose $z'\in V(P')\setminus V_i$, and let $a'$ and $b'$ be, respectively, the last and first vertices of $P'\cap Q$ before and after $z'$. Then
$$
    C'=P'[a',b']\cup Q[a',b']
$$
is a simple cycle of the same type as $C$. By the shortcut construction, $C'$ is included into $\overline{\operatorname{Int}}(C)$ and has a smaller enclosed area. This contradicts the minimal choice of $C$.

Therefore $k_i-k_{i-1}\geq R$. As $R=2^{i_0+i-1}$, we conclude that $k_i\geq k_{i-1}+2^{i_0+i-1}$.
\end{proof}

\end{document}